\newtheorem{theorem}{Theorem}
\newtheorem{prop}{Proposition}
\newtheorem{lemma}{Lemma}
\title{Identifying Feedforward  and Feedback Controllable Subspaces of Neural Population Dynamics}
\author{%
  Ankit Kumar \\
  Department of Neuroscience\\
  Redwood Center for Theoretical Neuroscience \\
  UC Berkeley\\
  \And
  Loren M. Frank \\
  Howard Hughes Medical Institute \\
  Kavli Institute for Fundamental Neuroscience \\
  Department of Physiology\\
  UC San Francisco \\
  \And
  Kristofer E. Bouchard \\
  Department of Neuroscience, UC Berkeley \\
  Redwood Center for Theoretical Neuroscience, UC Berkeley \\
  Scientific Data Division, LBNL \\
}
\begin{document}

\maketitle

\begin{abstract}
There is overwhelming evidence that cognition, perception, and action rely on feedback control. However, if and how neural population dynamics are amenable to different control strategies is poorly understood, in large part because machine learning methods to directly assess controllability in neural population dynamics are lacking. To address this gap, we developed a novel dimensionality reduction method, Feedback Controllability Components Analysis (FCCA), that identifies subspaces of linear dynamical systems that are most feedback controllable based on a new measure of feedback controllability. We further show that PCA identifies subspaces of linear dynamical systems that maximize a measure of feedforward controllability. As such, FCCA and PCA are data-driven methods to identify subspaces of neural population data (approximated as linear dynamical systems) that are most feedback and feedforward controllable respectively, and are thus natural contrasts for hypothesis testing. We developed new theory that proves that non-normality of underlying dynamics determines the divergence between FCCA and PCA solutions, and confirmed this in numerical simulations. Applying FCCA to diverse neural population recordings, we find that feedback controllable dynamics are geometrically distinct from PCA subspaces and are better predictors of animal behavior. Our methods provide a novel approach towards analyzing neural population dynamics from a control theoretic perspective, and indicate that feedback controllable subspaces are important for behavior.
\end{abstract}

\section{Introduction}
Feedback control has long been recognized to be central to brain function \citep{wiener_cybernetics_1948, conant1970every}. Prior work has established that, at the behavioral level, motor coordination \citep{todorov_optimal_2002}, speech production \citep{houde_speech_2011}, perception \citep{rao_predictive_1999}, and navigation \citep{pezzulo_navigating_2016, friston_active_2012} can be accounted for by models of optimal feedback control. Advances in the ability to simultaneously record from large number of neurons have further revealed that the brain computes through population dynamics \citep{vyas_computation_2020}. Nonetheless, whether neural population dynamics have particular subspaces that are more or less amenable to feedback control is unknown. Addressing this gap requires the development of novel methods to identify components of population activity relevant for control.

The cost incurred in controlling a dynamical system is referred to as its controllability. Existing measures of controllability center around the energy (in terms of the norm of the control signal) that must be expended to steer the system state. These measures are calculated from the controllability Gramian of the (linearized) system dynamics. Controllability is an intrinsic feature of the dynamical system itself, and may be estimated from measurements of system dynamics without reference to the specific inputs to the system \cite{pasqualetti_controllability_2013}. Network controllability analyses have delivered insights into the organization of proteomic networks \citep{vinayagam_controllability_2016}, human functional and structural brain networks \citep{medaglia_network_2018, tang_colloquium_2018, kim_role_2018, gu_controllability_2015}, and the connectome of \emph{C Elegans} \citep{yan_network_2017}. However, prior work in network controllability has exclusively focused open loop, or feedforward, controllability in the context of extracted networks, and not measures of closed loop, or feedback, controllability in the context of observed dynamics of data. Indeed, methods to asses feedback controllability from observations of the dynamics of neural populations are nascent.

Here, we developed dimensionality reduction methods that can be applied to neural population data that maximize the feedforward and feedback controllability of extracted subspaces. We first identify a correspondence between Principal Components Analysis (PCA) and the volume of state space reachable by feedforward control in linear dynamical systems \citep{pasqualetti_controllability_2013}--this provides a control-theoretic interpretation to PCA extracted subspaces. We then present Feedback Controllable Components Analysis (FCCA), a linear dimensionality reduction method to identify feedback controllable subspaces of high dimensional dynamical systems based on a novel measure of feedback controllability. We show that the FCCA objective function can be applied to data using only its second order statistics, bypassing the need for prior system identification and making the method easily applicable to high dimensional neural population recordings. 

Through theory and numerical simulations, we show that the degree of non-normality of the underlying dynamical system \citep{trefethen_spectra_2020} determines the degree of divergence between PCA and FCCA solutions. In the brain, the postsynaptic effect of every neuron is constrained to be either excitatory or inhibitory by Dale's Law. This structure implies that linearized dynamics within cortical circuits are necessarily non-normal \citep{murphy_balanced_2009}. Prior work has highlighted the capacity of non-normal dynamical systems to retain memory of inputs \citep{ganguli_memory_2008} and transmit information \citep{baggio_non-normality_2021}. Our results show that non-normality also plays a fundamental role in shaping the controllability of neural systems. Finally, we applied FCCA to diverse neural recordings and demonstrate that feedback controllable subspaces are better predictors of behavior than PCA subspaces (despite both being linear), and that the two subspaces are geometrically distinct.

\section{Controllable subspaces of linear dynamical systems}

Here, we provide detailed derivations of our data-driven measures of controllability. We first discuss the natural cost function to measure feedforward controllabiity (eq. \ref{eq:ffc_obj}) and highlight its correspondence to PCA. Next, we present the analogous measure for feedback controllabiity (eq. \ref{eq:lqg_trace}), and how it may be estimated \emph{implicitly} (i.e., without explicit model fitting) from the observed second order statistics of data (eq. \ref{eq:fccaobj}). We provide rationale for this cost function as measuring the complexity of the feedback controller required to regulate the observed neural population dynamics. 

We consider linear dynamical systems of the form:
\begin{align}
    \dot{x}(t) &= A x(t) + Bu(t) \quad y(t) = C x(t) \label{eq:lds_eqs}
\end{align}

where $x(t) \in \mathbb{R}^N$ is the neural state (i.e., the vector of neuronal activity, not a latent variable) and $u(t)$ is an external control input. $A \in \mathbb{R}^{N  \times N}$ is the dynamics matrix encoding the effective first order dynamics between neurons. $B \in \mathbb{R}^{N \times p}$ describes how inputs drive the neural state, and $C \in \mathbb{R}^{d \times N}, d << N$ is a readout matrix projecting the neural dynamics to a lower dimensional space. The input-output behavior (i.e., the mapping from $u(t)$ to $y(t)$) can equivalently be represented in the Laplace domain using the transfer function $G(s) = C(sI - A)^{-1}B$ \cite{kailath_linear_1980}.

Consider an invertible linear transformation of the state variable $x \to Tx$. Under such a state-space transformation, the input-output behavior of the system \ref{eq:lds_eqs} is left unchanged as the state space matrices transform as $(A, B, C) \to (T A T^{-1}, T B, C T^{-1})$. This implies that there are many possible choices of $(A, B, C)$ matrices, referred to as realizations, that give rise to the same transfer function $G(s)$. A minimal realization contains the fewest number of state variables (i.e., $A$ has the smallest dimension) amongst all realizations. Measures of controllabity that are \emph{intrinsic} to the dynamical system should be invariant across all realizations. We will show that our measures of feedforward and feedback controllabillity exhibit this property.

Throughout, we will assume that the observed data obeys following underlying state dynamics:
\begin{align}
    \dot{x}(t) &= A x(t) + Bdw(t); \quad dw(t) \sim \mathcal{N}(0, 1); \quad y(t) = C x(t) \label{eq:lds_eqs_obs}
\end{align}

Compared to eq. \ref{eq:lds_eqs}, $u(t)$ has been replaced by temporally white noise $dw(t)$, a reasonable assumption given that input signals are unmeasured in neural recordings. Our metrics of controllability rely only on observing the linear dynamics under this latent, stochastic excitation.

\subsection{Principal Components Analysis Eigenvalues Measure Feedforward Controllability}

A categorical definition of controllability for a dynamical system is that for any desired trajectory from initial state to final state, there exists a control signal $u(t)$ that could be applied to the system to guide it through this trajectory. For a (stable) linear dynamical system, a necessary and sufficient condition for this to hold is that the controllability Gramian, $\Pi$, has full rank. $\Pi$ is obtained from the state space parameters through the solution of the Lyapunov equation:
\begin{align}
    A \Pi + \Pi A^\top  = -BB^\top \quad \Pi = \int_0^\infty dt \; e^{A t} B B^\top e^{A^\top t}
    \label{eq:lyapunov}
\end{align}
The rank condition on $\Pi$ as a definition of controllability, while canonical \citep{kailath_linear_1980}, is an all or-nothing designation; either all directions in state space can be reached by control signals, or they cannot. Furthermore, this definition does not take into account the energy required to achieve the desired transition. While certain directions in state space may in principle be reachable, the energy required to push the system in those directions may be prohibitive.

Thus, given that the system is controllable, we can ask a more refined question: what is the energetic effort required to control different directions of state space? The energy required for control is measured by the norm of the input signal $u(t)$. It can be shown \citep{pasqualetti_controllability_2013} that to reach states that lie along the eigenvectors of $\Pi$, the minimal energy is proportional to the inverse of the corresponding eigenvalues of $\Pi$.  Directions of state space that have large projections along eigenvectors of $\Pi$ with small eigenvalues are therefore harder to control. For a unit-norm input signal, the volume of reachable state space is proportional to the determinant of $\Pi$ \citep{summers_submodularity_2016}.

The above intuition can be encoded into the objective function of a dimensionality reduction problem: for a fixed-norm input signal, find $C$ that maximizes the reachable volume within the subspace. This volume is measured by the determinant of $C \Pi C^\top$. Identifying subspaces of maximum feedforward controllability is then posed as the following optimization problem:
\begin{align}
    \text{argmax}_C \log \det C \Pi C^\top \;\; | \;\; C \in \mathbb{R}^{d \times N}, C C^\top = I_d
    \label{eq:ffc_obj}
\end{align}

Observe that under state space transformations, $\Pi$ maps to $T \Pi T^\top$, whereas $C$ maps to $C T^{-1}$. Hence, as desired, eq. \ref{eq:ffc_obj} is invariant to state space transformations and thus an intrinsic property of the dynamical system. We include the constraint $C C^\top = I_d$ to ensure the optimization problem is well-posed. Without it, one could, for example, multiply $C$ by a constant and increase the objective function. We can assess this objective function from data generated by eq. \ref{eq:lds_eqs_obs}, as in this case the observed covariance of the data will coincide with the controllability Gramian \citep{mitra_wmatrix_1969, kashima_noise_2016}. The solution of problem \ref{eq:ffc_obj} coincides with that of PCA, as the optimal $C$ of fixed dimensionality $d$ has rows given by the top $d$ eigenvectors of $\Pi$ (see Theorem 2 on pg. 7 and Lemma 1 in the Appendix).

\subsection{Linear Quadratic Gaussian Singular Values Measure Feedback Controllability}

How does one quantify the feedback controllability of a system? The primary distinction between feedforward control and feedback control is that the latter utilizes observations of the state to synthesize subsequent control signals. Feedback control therefore involves two functional stages: filtering (i.e., estimation) of the underlying dynamical state ($x(t))$ from the available observations ($y(t)$) and construction of appropriate regulation (i.e., control) signals. For a linear dynamical system, state estimation is optimally accomplished by the Kalman filter, whereas state regulation is canonically achieved via linear quadratic regulation (LQR). It will be crucial in what follows to recall that the Kalman Filter is an efficient, recursive, Gaussian minimum mean square error (MMSE) estimate of $x(t)$ given observations $y(\tau)$ for $\tau \leq t$. These two functional stages optimally solve the following cost functions:
\begin{align*}
    \text{Kalman Filter}:& \;\; \min_{p(x_0| y_{-T:0})} \lim_{T \to \infty}  \text{Tr} \left(\mathbb{E}\left[(\mathbb{E}(x_0|y_{-T:0}) - x_0)(\mathbb{E}(x_0 | y_{-T:0}) - x_0)^\top\right]\right)\\
    \text{LQR}:& \;\;\min_{u \in L^2[0, \infty)} \;\;\lim_{T \to \infty} \mathbb{E} \left[\frac{1}{T} \int_0^T x^\top C^\top C x + u^\top u \; dt \right]
\end{align*}
where $y_{-T:0}$ denotes observations over the interval $[-T, 0]$. The minima of these cost functions are obtained from the solutions of dual Riccati equations:
\begin{align}
    A Q + Q A^\top + B B^\top - Q C^\top C Q &= 0 \label{eq:KF}\\
    A^\top P + P A + C^\top C - P B B^\top P &= 0 \label{eq:LQR}
\end{align}
where
\begin{align*} 
    Q &= \min_{p(x_0| y_{-T:0})} \lim_{T \to \infty}  \mathbb{E}\left[(\mathbb{E}(x_0|y_{-T:0}) - x_0)(\mathbb{E}(x_0 | y_{-T:0}) - x)^\top\right]\\
    x_0^\top P x_0 &= \min_{u \in L^2[0, \infty)} \left\{\lim_{T \to \infty}  \mathbb{E} \left[\frac{1}{T} \int_0^T x^\top C^\top C x + u^\top u \; dt \right],\;\; x(0) = x_0 \right\}
\end{align*}
Here, $Q$ is the covariance matrix of the estimation error, whereas $P$ encodes the regulation cost incurred for varying initial conditions ($x_0$). $\text{Tr}(P)$ is proportional to the average regulation cost over all unit norm initial conditions.

The solutions of the Riccati equations are not invariant under the invertible state transformation $x \mapsto T x$. The filtering Riccati equation will transform as $Q \mapsto T Q T^\top$ whereas $P$ will transform as $(T^{-1})^\top P T^{-1}$. As such, simply by defining new coordinates via $T$ we can shape the difficulty of filtering and regulating various directions of the state space. Therefore $Q$ and $P$ on their own are not suitable cost functions for measuring feedback controllability. However, the product $PQ$ undergoes a similarity transformation $PQ \to (T^\top)^{-1} QP T^\top$. \emph{Hence, the eigenvalues of $PQ$ are invariant under similarity transformations, and define an intrinsic measure of the feedback controllability of a system. Additionally, there exists a particular $T$ that diagonalizes $PQ$.} Following \citep{jonckheere_new_1983}, we refer to the corresponding eigenvalues as the LQG (Linear Quadratic Gaussian) singular values. In this basis, the cost of filtering each direction of the state space equals the cost of regulating it. We formalize these statements by restating Theorem 1 from \citep{jonckheere_new_1983}:

\begin{theorem}
Let $(A, B, C)$ be a minimal realization of $G(s)$. Then, the eigenvalues of $QP$ are similarity invariant. Further, these eigenvalues are real and strictly positive. If $\mu_1^2 \geq \mu_2^2 \geq \mu_N^2 > 0$ denote the eigenvalues of $QP$ in decreasing order, then there exists
a state space transformation $T$, $(A, B, C) \to (T A T^{-1}, T B, C T^{-1}) \equiv (\tilde{A}, \tilde{B}, \tilde{C})$ such that:
$$ Q = P = \text{diag}(\mu_1, \mu_2, ..., \mu_N) $$
The realization $(\tilde{A}, \tilde{B}, \tilde{C})$ will be called the closed-loop balanced realization.
\end{theorem}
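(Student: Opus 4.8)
The plan is to prove the three assertions in order: similarity invariance of $\mathrm{spec}(QP)$, reality and strict positivity of that spectrum, and the explicit construction of the balancing transformation $T$. For similarity invariance I would substitute $(\tilde A, \tilde B, \tilde C) = (TAT^{-1}, TB, CT^{-1})$ into eq.~\ref{eq:KF} to verify in one line that $\tilde Q = TQT^\top$ solves the transformed filtering Riccati equation, and into eq.~\ref{eq:LQR} to verify that $\tilde P = (T^{-1})^\top P T^{-1}$ solves the transformed control Riccati equation; these also match the transformation laws of an error covariance and of a quadratic cost noted just before the theorem. Since the stabilizing (Kalman/LQR) solutions of these equations are unique, $\tilde Q$ and $\tilde P$ are the matrices genuinely attached to the new realization, and $\tilde Q \tilde P = TQT^\top (T^{-1})^\top P T^{-1} = T(QP)T^{-1}$, so $QP$ and $\tilde Q\tilde P$ are similar and share a spectrum. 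This same substitution gets reused at the end to certify the claimed form of the balanced realization.

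\emph{Real positive spectrum.} Minimality gives that $(A,B)$ is controllable and $(A,C)$ is observable. I would write the stabilizing solutions in closed-loop Lyapunov form, $Q = \int_0^\infty e^{(A - LC)t}(BB^\top + LL^\top) e^{(A-LC)^\top t}\,dt$ with $L = QC^\top$, and $P = \int_0^\infty e^{(A-BK)^\top t}(C^\top C + K^\top K) e^{(A - BK)t}\,dt$ with $K = B^\top P$, and then run a PBH argument: for $P$ directly, any $v$ with $(A-BK)v = \lambda v$, $Cv = 0$, $Kv = 0$ satisfies $Av = \lambda v$, contradicting observability of $(A,C)$; for $Q$ the same by duality. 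Hence $Q \succ 0$ and $P \succ 0$, so $QP = Q^{1/2}(Q^{1/2}PQ^{1/2})Q^{-1/2}$ is similar to the symmetric positive-definite matrix $Q^{1/2}PQ^{1/2}$, and its eigenvalues are real and strictly positive; I will write them $\mu_1^2 \ge \cdots \ge \mu_N^2 > 0$ with each $\mu_i > 0$.

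\emph{Balancing transformation.} I would factor $Q = RR^\top$ with $R$ invertible (a Cholesky factor, or $R = Q^{1/2}$), diagonalize the symmetric positive-definite matrix $R^\top P R = U\Sigma^2 U^\top$ with $U$ orthogonal and $\Sigma = \mathrm{diag}(\mu_1,\dots,\mu_N)$ — the eigenvalues being $\mu_i^2$ because $R^\top P R$ is similar to $PRR^\top = PQ$ — and set $T = \Sigma^{1/2} U^\top R^{-1}$, which is invertible. A direct computation gives $TQT^\top = \Sigma^{1/2} U^\top R^{-1}(RR^\top)(R^{-1})^\top U \Sigma^{1/2} = \Sigma$ and $(T^{-1})^\top P T^{-1} = \Sigma^{-1/2} U^\top (R^\top P R) U \Sigma^{-1/2} = \Sigma$, and combining with the first step the realization $(TAT^{-1}, TB, CT^{-1})$ has $\tilde Q = \tilde P = \Sigma$, which is the claim. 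I expect the only genuine obstacle to be the strict positive-definiteness step: one must work with the stabilizing solutions of eq.~\ref{eq:KF}--eq.~\ref{eq:LQR}, not arbitrary ones, and it is minimality — not merely stabilizability and detectability — that promotes $Q,P\succeq 0$ to $Q,P\succ 0$, which is precisely what makes $R^{-1}$, $\Sigma^{1/2}$, and the strict ordering $\mu_i>0$ well defined. Everything else is routine linear algebra.
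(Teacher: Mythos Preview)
Your proposal is correct, and the constructive balancing step is exactly the paper's argument: the paper's proof consists solely of taking a Cholesky factorization $Q = LL^\top$, diagonalizing $L^\top P L = U\Sigma^2 U^\top$, and setting $T = \Sigma^{1/2} U^\top L^{-1}$, which is your $R$, $U$, $\Sigma$, $T$ verbatim. Your treatment of similarity invariance and of strict positivity of $Q$ and $P$ (via the closed-loop Lyapunov integrals and PBH) simply fills in claims the paper states in the surrounding text but does not prove, so you are supplying more detail rather than a different route.
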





\begin{proof}
Let $Q = L L^\top$ be the Cholesky decomposition of $Q$ and let $L^\top P L$ have Singular Value Decomposition $U \Sigma^2 U^\top$. Then, one can check $T = \Sigma^{1/2} U^\top L^{-1}$ provides the desired transformation.
\end{proof}
Hence, as an intrinsic measure of feedback controllability, we take the sum of the LQG singular values $\mu_i^2$, corresponding to the sum of the ensemble cost to filter and regulate each direction of the neural state space:
\begin{align}
    \text{Tr}(PQ)
    \label{eq:lqg_trace}
\end{align}
\subsection{The Feedback Controllability Components Analysis Method.}
We developed a novel dimensionality reduction method, Feedback Controllability Components Analysis (FCCA), that can be readily applied to observed data from typical systems neuroscience experiments. To do so, we construct estimators of the LQG singular values, and hence $\text{Tr}(PQ)$, directly from the autocorrelations of the observed neural firing rates. The FCCA objective function arises from the observation that causal and acausal Kalman filtering are also related via dual Riccati equations. We first show that through an appropriate variable transformation, we obtain a state variable $x_b(t)$ whose dynamics unfold backwards in time via the same dynamics matrix ($A$) which evolves $x(t)$ (the neural state) forwards in time. Once established, this enables us to use the error covariance matrix of Kalman filtering $x_b(t)$ as a stand-in for the cost of regulating $x(t)$. 

In particular, given the state space realization of the forward time stochastic linear system in eq. \ref{eq:lds_eqs_obs}
, the joint statistics of $(x(t), y(t))$ can equivalently be parameterized by a Markov model that evolves backwards in time \citep{l_ljung_backwards_1976}:
\begin{align}
    -\dot{x}_b (t) &= A_b x_b(t) + B dw(t) ; \quad 
    y = C x_b(t) \label{eq:backwards}
\end{align}
where $A_b = -A - BB^\top \Pi^{-1} = \Pi A^\top \Pi^{-1}$ and $\Pi = \mathbb{E}[x(t) x(t)^\top]$ is the solution of the Lyapunov equation (eq. \ref{eq:lyapunov})

Examination of eq. \ref{eq:KF} and eq. \ref{eq:LQR} reveals that the filtering and LQR Riccati equations differ primarily in two respects. First, the dynamics matrix is transposed ($A \to A^\top$). Second, the inputs and outputs have been exchanged ($B \to C^\top$, $C \to B^\top$). To use the error covariance of state filtering as a stand-in for the state regulation cost, we therefore require that the corresponding acausal state dynamics (determined by $A_b$) respect these differences. To this end, consider the transformed state $x_a (t) = \Pi^{-1} x(t)$. Substituting $x(t)= \Pi x_a (t)$ and $A_b = \Pi A^\top \Pi^{-1}$ into the equations for the backward dynamics result in following dynamics for this \emph{adjoint} state:
\begin{align*}
-\dot{x}_a (t) = A^\top x_a (t) + \Pi^{-1} B dw(t)
\end{align*}
Then, if we construct a readout of this transformed state $y_a(t) = C \Pi x_a(t) = C x(t)$, the Riccati equation associated with Kalman filtering $x_a$, whose solution we denote $\tilde{P}$, takes on the form:
\begin{align}
A^\top \tilde{P} + \tilde{P} A + \Pi^{-1} B B^\top \Pi^{-1} - \tilde{P} \Pi C^\top C \Pi \tilde{P} &= 0 \label{eq:bKF}\\
A^\top P + P A + C^\top C - P B B^\top P &= 0 \tag{eq \ref{eq:LQR}}
\end{align}
We see that eq. \ref{eq:bKF} coincides with eq. \ref{eq:LQR} (reproduced for convenience) upon switching the inputs and outputs ($B \to C^\top$, $C \to B^\top$) and reweighting them by a factor of $\Pi^{-1}$ and $\Pi$, respectively. In fact, eq. \ref{eq:bKF} coincides with the Riccati equation associated with a slightly modified LQR problem:
\begin{align}
\;\;\min_{u \in L^2[0, \infty)} \;\;\lim_{T \to \infty} \mathbb{E} \left[\frac{1}{T} \int_0^T x^\top \Pi^{-1} B B^\top \Pi^{-1} x + u^\top \Pi^2 u \; dt \right]
\label{eq:FCCA_LQR}
\end{align}
This is the regulator problem for the adjoint state $x_a(t) = \Pi^{-1} x(t)$. Therefore, under the assumption that the observed dynamics can be approximated by a linear dynamical system, 
we can measure LQG singular values associated with this modified LQR problem directly from measuring the causal minimum mean square error (MMSE) associated with prediction of $x(t)$ ($Q$), and the acausal MMSE associated with prediction of $x_a(t)$ ($\tilde{P}$).

To explicitly construct an estimator of the quantity $\text{Tr}(\tilde{P} Q) = \text{Tr}(Q \tilde{P})$, recall the matrix $Q$ is the error covariance of MMSE prediction of the system state $x(t)$ given past observations $y(t)$ over the interval $(t - T, t)$, whereas the matrix $\tilde{P}$ is the error covariance of MMSE prediction of the transformed system state $x_a(t)$ given future observations $y_a(t)$ over the interval $(t, t + T)$. The choice of $T$ is the only hyperparameter associated with FCCA. As discussed above, the Kalman Filter is used to efficiently calculate these MMSE estimates given an explicit state space model of the dynamics. In our case,  to keep system dynamics implicit, we instead directly use the formulas for the MMSE error covariance in terms of cross correlations between $x(t), x_a(t)$ and $y(t), y_a(t)$. The standard formulas for the error covariance of MMSE prediction of a Gaussian distributed variable $z$ given $v$ read: $ \Sigma_z - \Sigma_{zv} \Sigma_v^{-1} \Sigma_{vz}^\top$ where $\Sigma_z = \mathbb{E} [z z^\top], \Sigma_v = \mathbb{E} [v v^\top]$ and $\Sigma_{zv} = \mathbb{E} [z v^\top]$. The FCCA objective function is thus:
\begin{align}
    \text{FCCA}: \quad 
    \text{argmin}_C \text{Tr}\left[\underbrace{\Big(\Pi - \Lambda_{1:T}(C) \Sigma^{-1}_T (C) \Lambda^\top_{1:T}(C)  \Big)}_{\text{causal MMSE covariance } (Q)} 
                                      \underbrace{\Big(\Pi^{-1} - \tilde{\Lambda}_{1:T}^\top(C) \Sigma^{-1}_T (C) \tilde{\Lambda}_{1:T}(C)  \Big)}_{\text{acausal MMSE covariance } (\tilde{P})}\right]
\label{eq:fccaobj}
\end{align}
where for discretization timescale $\tau$,
\begin{align*}
    \Pi &= \underset{\text{(covariance of the neural data)}}{\mathbb{E}[x(t) x(t)^\top]}, \; \Lambda_k = \underset{(\text{autocorrelation of the neural data})}{\mathbb{E}[x(t + k \tau) x(t)^\top]}, \; \underset{(\text{autocorrelations of the adjoint state})}{\tilde{\Lambda}_k = \mathbb{E}[x_a(t + k \tau) x_a(t)^\top]}\\
    \Lambda_{1:T}(C) &= \{\Lambda_1 C^\top, \Lambda_2 C^\top, ..., \Lambda_T C^\top \}, \;
    \tilde{\Lambda}_{1:T}(C) = \{\tilde{\Lambda}_1 \Pi C^\top, \tilde{\Lambda}_2 \Pi C^\top, ..., \tilde{\Lambda}_T \Pi C^\top \} \\
\end{align*}
and $\Sigma_T(C)$ is a block-Toeplitz space by time covariance matrix of $y(t)$ (i.e. the $ij^{\text{th}}$ block of $\Sigma_T(C)$ is given by $C^\top \Lambda_{|i - j|} C$. We optimize the FCCA objective function via L-BFGS.

\subsection{Control-Theoretic Intuition for FCCA}

We have shown how the sum of LQG singular values is an intrinsic measure of the cost to filter/regulate a linear dynamical system which is minimized at a fixed readout dimensionality by FCCA. We now provide further intuition for FCCA. In order to control the system state and carry out the computations necessary to perform state estimation and control signal synthesis, the controller itself must implement its own internal state dynamics. Thus, in addition to the complexity of the system itself, we may inquire about the complexity of the controller. One intuitive measure of this complexity is given by the controller's state dimension (i.e., the McMillan degree), or the number of dynamical degrees of freedom it must implement to function. In the context of brain circuits, the degrees of freedom of the controller must ultimately be implemented via networks of neurons. We therefore hypothesize that biology favors performing task relevant computations via dynamics that require low dimensional controllers to regulate. As we argue below, minimizing the sum of LQG singular values over readout matrices ($C$) corresponds to a relaxation of the objective of searching for a subspace that enables control via a controller of low dimension. In other words, FCCA searches for dynamics that can be regulated with controllers of low complexity.

\begin{center}
\includegraphics[width=0.8\textwidth]{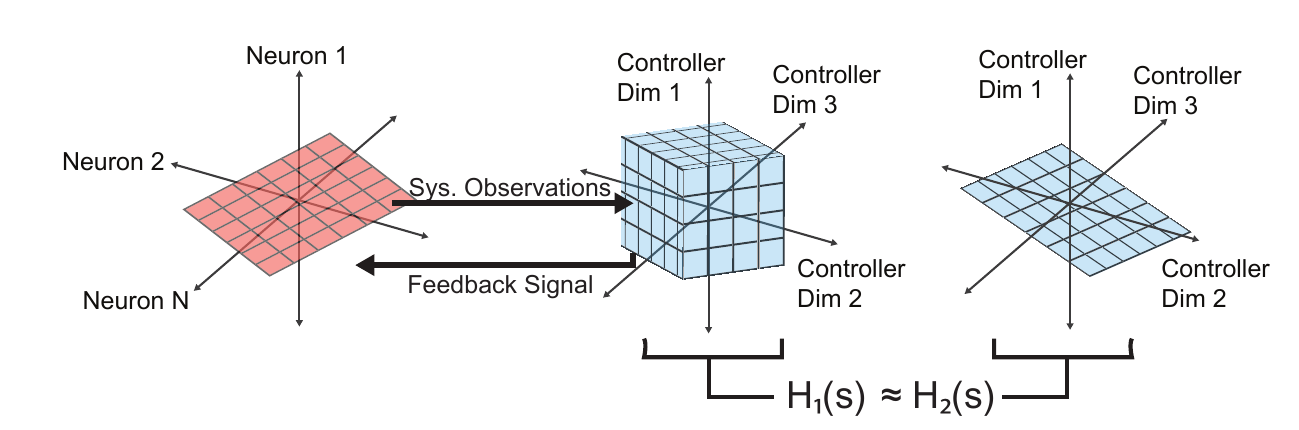}
\captionof{figure}{In principle, a controller of dimension as large as the neural state space may be required to effectively regulate dynamics within a FBC subspace ($H_1(s)$). However, subspaces optimized to minimize either the rank, or more practically, the trace of $PQ$ will require controllers of lower dimensionality to achieve near-optimal performance ($H_2(s)$).}
\label{fig:lqg_reduction}
\end{center}

Recall from Theorem 1 above that there exists a linear transformation that simultaneously diagonalizes both $P$ and $Q$. Let $(\tilde{A}, \tilde{B}, \tilde{C})$ be the corresponding
balanced realization. Order the LQG singular values in descending magnitude $\{\mu_1, ..., \mu_N\}$ and divide them into two sets $\{\mu_1, ..., \mu_m \}$ and $\{\mu_{m + 1}, ... , \mu_N\}$. Assume the system input is of dimensionality $p$ and the output is of dimension $d$ (i.e., $\tilde{B} \in \mathbb{R}^{N \times p}$ and $\tilde{C} \in \mathbb{R}^{d \times N}$). Then, one can partition the state matrices $\{\tilde{A}, \tilde{B}, \tilde{C} \}$ accordingly:
\begin{align*}
    \tilde{A} &= \begin{bmatrix} A_{11} & A_{12} \\ A_{21} & A_{22} \end{bmatrix}\quad
    \tilde{B} = \begin{bmatrix} B_1 \\ B_2 \end{bmatrix} \quad \tilde{C} = \begin{bmatrix} C_1 & C_2 \end{bmatrix}
\end{align*}
Where $A_{11} \in \mathbb{R}^{m \times m}, A_{22} \in \mathbb{R}^{N - m \times N - m}$, $B_1 \in \mathbb{R}^{m \times p}, B_2 \in \mathbb{R}^{N - m \times p}$, $C_1 \in \mathbb{R}^{d \times m}, C_2 \in \mathbb{R}^{d \times N - m}$. It can be shown that the optimal controller of dimension $m$ is obtained
from solving the Riccati equations corresponding to the truncated system $(A_{11}, B_1, C_1)$. If the LQG singular values $\{\mu_{m + 1}, ... , \mu_N\}$ are negligible, then the controller dimension can be reduced with essentially no loss in regulation performance. We illustrate this idea schematically in \textbf{Figure \ref{fig:lqg_reduction}}, where the controller with transfer function $H_1(s)$ is approximated by a controller with lower state dimension $H_2(s)$. This suggests that to search for subspaces of neural dynamics that require low dimensional controllers to regulate, one should minimize the objective function $\text{argmin}_C \text{Rank}(\tilde{P}Q)$, where $\tilde{P}$ and $Q$ are the solutions to the Riccati equations \ref{eq:bKF} and \ref{eq:KF}, respectively. However, rank minimization is an NP-hard problem. A convex relaxation of the rank function is the nuclear norm (i.e. the sum of the singular values) \citep{fazel_rank_2004}. Given that $\tilde{P}Q$ is a positive semi-definite matrix, a tractable objective function that seeks subspaces of dynamics that require low complexity controllers is given by:
\begin{align*}
    \text{argmin}_C \text{Tr}(\tilde{P}Q)
\end{align*}
which is precisely what FCCA minimizes in a data-driven fashion (eq. \ref{eq:fccaobj}).

\section{PCA and FCCA subspaces diverge in non-normal dynamical systems}
\begin{wrapfigure}{l}{0.35\textwidth}
    \vspace{-24pt}
    \centering    \includegraphics[width=0.35\textwidth]{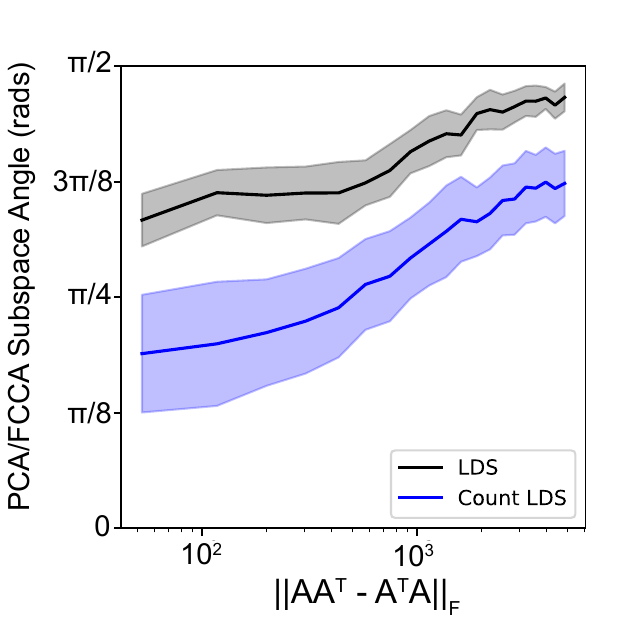}
    \captionof{figure}{(Black) Average subspace angles between $d=2$ FCCA and PCA projections applied to Dale's law constrained linear dynamical systems (LDS) as a function of non-normality. Spread indicates standard deviation over 20 random generations of A and 10 random initializations of FCCA. (Blue) Subspace angles between $d=2$ FCCA and PCA projections applied to firing rates derived from spiking activity driven by Dale's Law constrained LDS. Spread around both curves indicates standard deviation taken over 20 random generations of $A$ matrices and 10 random initializations of FCCA.}
    \label{fig:soc_ssa}
    \vspace{-16pt}
\end{wrapfigure}

Having derived data driven optimization problems to identify feedforward (PCA) and feedback (FCCA) controllable subspaces, we investigated under what conditions the solutions of PCA and FCCA will be distinct. We found that a key feature of the dynamical system of eq. \ref{eq:lds_eqs} that determines the similarity of PCA and FCCA solutions is the non-normality of the underlying dynamics matrix, $A$. We first prove that when $A$ is normal (symmetric), and $B=I$, the critical points of PCA (eq. \ref{eq:fccaobj}) and the FCCA objective function (eq. \ref{eq:lqg_trace}) coincide. \footnote{The set of real-valued, normal $A$ matrices can be divided into symmetric and orthogonal matrices. We restrict our treatment to stable dynamical systems. As orthogonal matrices give rise to systems that are only marginally stable, below we will use normal $A$ to refer interchangeably to symmetric $A$.}
\begin{theorem} For $B=I_N, A = A^\top, A \in \mathbb{R}^{N \times N}$, with all eigenvalues of $A$ distinct and $\max \text{Re}(\lambda(A)) < 0$, the critical points of the feedforward controllability objective function eq. \ref{eq:ffc_obj} and the feedback controllability objective function eq. \ref{eq:lqg_trace} for projection dimension $d$ coincides with the eigenspace spanned by the $d$ eigenvalues with largest real value.
\end{theorem}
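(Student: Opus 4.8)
The plan is to normalize by a state-space rotation, settle the feedforward side with Lemma~1, and reduce the feedback side --- the quantity $\text{Tr}(\tilde P Q)$ that FCCA minimizes (eq.~\ref{eq:fccaobj}) --- to an explicit scalar optimization over choices of $d$ eigenvectors of $A$, using a closed form for the Riccati solutions that is special to normal $A$. First I would use that both objectives are invariant under an orthogonal state-space transformation $x \mapsto Ux$: only $BB^\top$ and $C^\top C$ enter eqs.~\ref{eq:lyapunov}, \ref{eq:KF}, \ref{eq:LQR}, \ref{eq:bKF}, the constraints $BB^\top = I$ and $CC^\top = I_d$ are preserved, and $\Pi, Q, P, \tilde P$ transform by conjugation, so $\log\det C\Pi C^\top$ and $\text{Tr}(\tilde P Q)$ are unchanged. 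Hence I may take $A = \Lambda = \text{diag}(\lambda_1 > \dots > \lambda_N)$, all $\lambda_i < 0$ and distinct, and optimize over $C$ with $CC^\top = I_d$; then $M := C^\top C$ is the orthogonal projector onto $\text{row}(C)$. Solving eq.~\ref{eq:lyapunov} gives $\Pi = -\tfrac12\Lambda^{-1}$, diagonal with eigenvalues $-1/2\lambda_i$ strictly decreasing in $i$. For the feedforward objective (eq.~\ref{eq:ffc_obj}), Lemma~1 together with distinctness of $\text{eig}(\Pi)$ gives that the critical points of $\max_C \log\det C\Pi C^\top$ on the Stiefel manifold are exactly the coordinate $d$-subspaces, with unique maximizer the span of the eigenvectors of the $d$ largest $-1/2\lambda_i$, i.e. the $d$ eigenvalues of $A$ with largest real part.

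For the feedback objective, the algebraic engine is the identity $\tilde P = 4AQA$, which holds because $A = A^\top$ and $B = I$: substituting it into eq.~\ref{eq:bKF}, using $A\Pi = \Pi A = -\tfrac12 I$ to collapse the quadratic term to $4AQC^\top CQA$, and pulling $A$ out on the left and right recovers $4A\cdot(\text{eq.~\ref{eq:KF}})\cdot A = 0$; positive-semidefiniteness and uniqueness of the stabilizing solution then identify $\tilde P$. Thus $\text{Tr}(\tilde P Q) = 4\,\text{Tr}\big((AQ)^2\big)$. Next, conjugating eq.~\ref{eq:KF} by $Q^{-1}$ turns it into $Q^{-1}A + AQ^{-1} + Q^{-2} = M$, and completing the square gives $(Q^{-1} + A)^2 = M + A^2$; since $Q \preceq \Pi$ forces $Q^{-1} + A \succeq -A \succ 0$, I obtain the closed form $Q(M)^{-1} = (M + A^2)^{1/2} - A$ (positive-semidefinite square root). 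The feedback objective is therefore the explicit function
\[ F(M) = 4\,\text{Tr}\!\left(\Big[A\big((M + A^2)^{1/2} - A\big)^{-1}\Big]^2\right) \]
of the rank-$d$ projector $M = C^\top C$.

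I would then analyze the critical points of $F$ on the Grassmannian of rank-$d$ projectors. At a coordinate projector $M_S = \sum_{i\in S} e_i e_i^\top$ with $|S| = d$, the matrices $A$, $M_S$, and $(M_S + A^2)^{1/2}$ are simultaneously diagonal, hence so is $Q$, and (differentiating through the Sylvester equation for $\partial(M + A^2)^{1/2}$) so is the Euclidean gradient $\nabla_M F$; a diagonal gradient commutes with $M_S$, so every $M_S$ is a critical point. Evaluating, $F(M_S) = \sum_{i \in S} (p_i^2 - 1)^2 + (N - d)$, where $p_i := \lambda_i + \sqrt{\lambda_i^2 + 1} \in (0,1)$ satisfies $2\lambda_i p_i = p_i^2 - 1$; since $\lambda \mapsto p(\lambda)$ is strictly increasing and $t \mapsto (t-1)^2$ is strictly decreasing on $(0,1)$, each summand is strictly decreasing in $\lambda_i$, so (by distinctness of the $\lambda_i$) $F(M_S)$ is minimized over $|S| = d$ uniquely at $S = \{1, \dots, d\}$ --- the same top-$d$ eigenspace selected by PCA. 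The two optimizers thus coincide and equal the eigenspace of the $d$ largest-real-part eigenvalues of $A$.

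The step I expect to be the main obstacle is ruling out spurious critical points: showing that $F$ has no critical points on the Grassmannian other than the $M_S$, so that the critical sets of the two objectives genuinely match. The plan is to write the stationarity condition as $[\nabla_M F(M), M] = 0$, express $\nabla_M F(M)$ via the sensitivities of the composition $M \mapsto (M+A^2)^{1/2} \mapsto Q(M) \mapsto \text{Tr}((AQ)^2)$, and argue that --- because $A$ has distinct eigenvalues --- a gradient of this form can commute with a rank-$d$ projector $M$ only if $M$ commutes with $A$, i.e. $M = M_S$. The distinct-eigenvalue hypothesis is exactly what forbids continua of critical points, both here and (via Lemma~1) on the PCA side.
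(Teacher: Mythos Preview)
Your approach is correct in its essentials and takes a genuinely different route from the paper's. Two remarks, then a comment on your flagged obstacle.

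First, a target mismatch: you work with $\text{Tr}(\tilde P Q)$ where $\tilde P$ solves eq.~\ref{eq:bKF}, and your identity $\tilde P = 4AQA$ is correct. The paper's proof instead takes $P$ from a modified LQR problem (state weight $I_N$, control entering through $C^\top$ rather than $B$), under which the Riccati equations for $P$ and $Q$ become \emph{identical}; the objective therefore reduces to $\text{Tr}(Q^2)$, not $4\,\text{Tr}((AQ)^2)$. Your closed form $Q^{-1}=(M+A^2)^{1/2}-A$ applies equally well to $\text{Tr}(Q^2)$, so the discrepancy is cosmetic---but you should state which variant you are treating, since the theorem as written references eq.~\ref{eq:lqg_trace}.

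Second, your route via the explicit Riccati solution is substantially cleaner than the paper's. The paper never obtains a closed form for $Q$; instead it analyzes the Newton--Kleinman iterates $Q_m$ (each solving a Lyapunov equation), computes the directional derivatives of $\text{Tr}(Q_m^2)$ along a basis of the Stiefel tangent space at $C_{\text{slow}}$ by expanding a $3\times 3$ block subsystem to first order in the perturbation parameter, checks that the $O(\alpha)$ contribution to the diagonal of $Q_m$ vanishes, and then passes to the limit $m\to\infty$ via a uniform-convergence lemma. Your argument---$Q$ diagonal at $M_S$, hence the Euclidean gradient is diagonal, hence it commutes with $M_S$---reaches the same conclusion in one step. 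The trade-off is that your square-root identity $(Q^{-1}+A)^2=M+A^2$ leans hard on $A=A^\top$ and $B=I$, whereas the Newton--Kleinman machinery is available more generally; but within the stated hypotheses your route is preferable.

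On your ``main obstacle'': the paper does not resolve it either. The paper proves only that $V_d$ is a critical point (its Lemma~2), and states explicitly that global optimality on the feedback side is established rigorously only for the $2\to 1$ reduction. So your incomplete step---showing that the coordinate projectors $M_S$ exhaust the critical set---goes beyond what the theorem (as the paper interprets and proves it) actually requires; leaving it open is not a gap relative to the paper's own argument. Your ranking of $F(M_S)$ over all coordinate subspaces $S$ already exceeds what the paper establishes.
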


The proof of the theorem is provided in the Appendix. The restriction to $B=I$ is made within the proof, but does not apply to the general application of the method. Intuitively, in the case of symmetric, stable $A$, perturbations exponentially decay in all directions, and so the maximum response variance, and hence greatest feedforward controllability, is contained in the subspace with slowest decay, which corresponds to the eigenspace spanned by the $d$ eigenvalues with largest real value. The intuition for the slow eigenspace of $A$ serving as a (locally) optimal projection in the feedback controllability case is given by the fact that state reconstruction from past observations, the goal of the Kalman filter, will occur optimally using observations that have maximal autocorrelations with future state dynamics. Similarly, for the LQR, for a fixed rank input, the most variance will be suppressed by regulating within the subspace with slowest relaxation dynamics.

However, due to Dale's Law, brain dynamics will be generated by non-normal dynamical systems. To demonstrate the effect of increasing the non-normality of $A$ on the solutions of PCA and FCCA, we turn to numerical simulations (the optimal feedback controllable projections are not analytically tractable). We generated 200-dimensional dynamics matrices constrained to follow Dale's Law with an equal number of excitatory and inhibitory neurons. Neurons were connected randomly with a uniform connection probability of 0.25. To tune the non-normality of the system, we vary the strength of synaptic weights in the neuronal connectivity matrix. The strength of synaptic weights determines the spectral radius of the corresponding matrices \citep{rajan_eigenvalue_2006}. Leaving the excitatory weights fixed, we then optimize the inhibitory weights as detailed in \citep{hennequin_optimal_2014} to ensure system stability. The resulting matrices will have enhanced non-normality, with the degree of resulting non-normality having, empirically, a monotonic relationship with the starting spectral radius. We applied our methods both directly to the cross-covariance matrices of the resulting linear dynamical systems, as well as to spiking activity driven by simulated $x_t$. In the latter case, spiking activity was generated as a Poisson process with rate $\lambda_t = \exp(x_t)$. Firing rates were obtained by binning spikes and applying a Gaussianizing boxcox transformation \citep{sakia1992box}. These rates were then used to estimate the cross-covariance matrices. This procedure mirrors that which was applied to neural data in the subsequent section. 


In \textbf{Figure \ref{fig:soc_ssa}}, we plot the average subspace angles between FCCA and PCA for $d=2$ projections (other choices of $d$ shown in \textbf{Supplementary Figure \ref{sfig:soc_ssa_vdim}} applied both directly to cross-covariance matrices of the linear dynamical systems (LDS, black) and cross-covariance matrices estimated from spiking activity (Count LDS, blue) as a function of the non-normality of the underlying $A$ matrix  (measured using the Henrici metric, $||A^\top A - A A^\top ||_F$). In both cases, we observe a nearly monotonic increase in the angles between FCCA and PCA subspaces as non-normality is increased. We note that as we constrain $A$ matrices to follow Dale's Law, we cannot tune them to be completely normal, and hence the subspace angles between FCCA and PCA remain bounded away from zero even at the lower end of non-normality. In \textbf{Supplementary Figure \ref{sfig:soc_ns}}, we show that large subspace angles between FCCA and PCA also persist in the case when systems dynamics are made non-stationary by switching between a sequence of different linear systems. Thus, this new control-theoretic result suggests that PCA and FCCA subspaces should be geometrically distinct in neural population data, given the generality of non-normality of dynamics due to Dale's Law.
\vspace{-6pt}
\section{FCCA subspaces are better predictors of behavior than PCA subspaces}

\begin{center}
\includegraphics[width=0.8\textwidth]{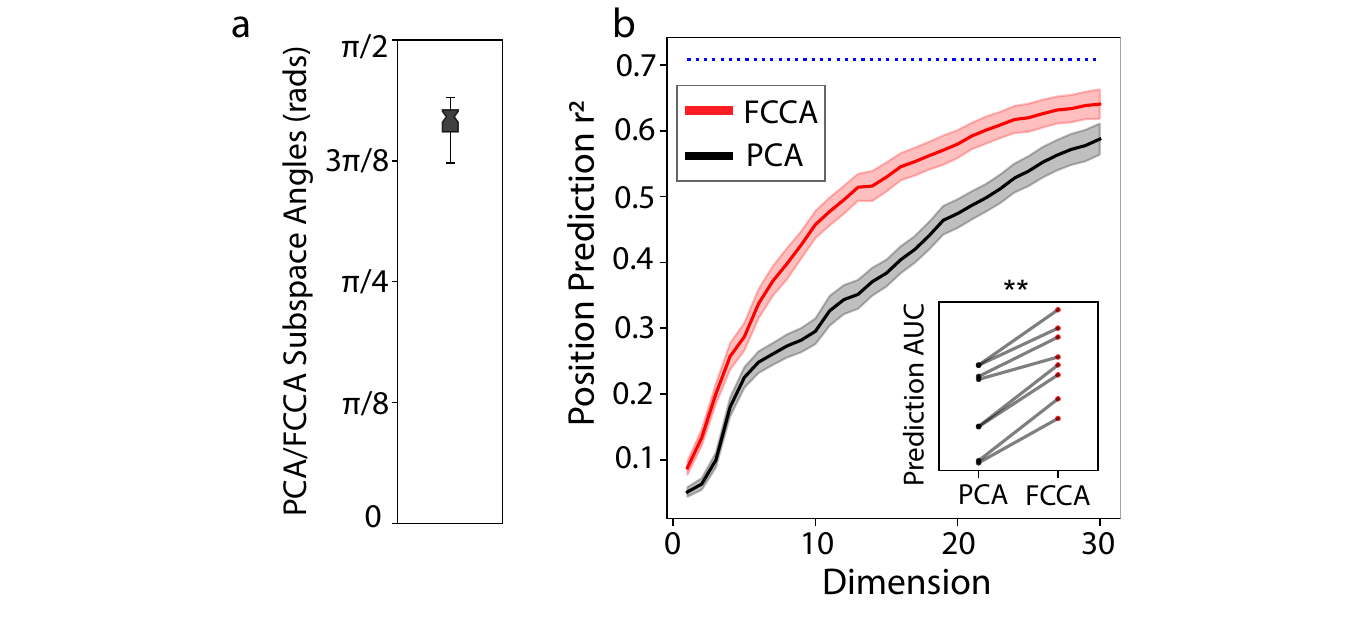}
\captionof{figure}{(a) Average subspace angles between FCCA and PCA at $d=2$ across recording sessions (median $\pm IQR$ indicated). (b) Five-fold cross-validated position prediction $r^2$ as a function of projection dimension between for FCCA (red) and PCA (black) and without dimensionality reduction (dashed blue). Mean $\pm$ standard error across folds and recording sessions indicated. (inset) Total area under the curve (AUC) of decoding performance averaged over folds for PCA and FCCA within each recording session (** : $p < 10^{-2}, n=8$, Wilcoxon signed rank test)}
\label{fig:hpc}
\end{center}

\begin{table}
  \caption{FCCA/PCA comparison across neural datasets}
  \label{sample-table}
  \centering
  \begin{tabular}{lllll}
    \toprule
    Dataset/Brain Region & $N_r$ & $\theta$ (deg.) & Peak Percent $\Delta$-$r^2$ & $\Delta$-$r^2$ AUC\\
    \midrule
     Hippocampus & 8 & $74.6 \pm 1.7$ & $465 \pm 144 \%$ & $3.14 \pm 0.30$\\
    M1 random & 35 & $58.0 \pm 1.1$ & $229 \pm 58 \%$  & $2.75 \pm 0.12$ \\
    S1 random & 8 & $67.5 \pm 3.8 $ & $761 \pm 189 \%$ & $2.47 \pm 0.36$\\
    M1 maze & 5 & $49.4 \pm 4.3$ & $290 \pm 72 \%$ & $1.45 \pm 0.23$ \\
    \bottomrule
  \end{tabular}
\end{table}

We first applied FCCA to tetrode recordings from the rat hippocampus made during a maze navigation
task. Further details on the dataset and preprocessing steps used are provided in the Appendix. In each recording session, we fit PCA and FCCA to neural activity across a range of projection dimensions. In line with the predictions of our theory and numerical simulations, we find that the subspace angle between PCA and FCCA was consistently large across recording sessions ($ > 3\pi/8$, \textbf{Figure \ref{fig:hpc}a}, median and IQR indicated). We used $T=3$ (time bins) as the FCCA hyperparameter. Thus, we find that feedforward and feedback controllable subspaces are geometrically distinct in neural activity.

We next assessed the extent to which feedback controllable dynamics, as identified by FCCA, as opposed to feedforward controllable dynamics, as identified by PCA, were relevant for behavior. We trained linear decoders of the rat position from activity projected into FCCA and PCA subspaces. We used a window of 300 ms of neural activity centered around each time point to predict the corresponding binned position variable. We used linear decoders to emphasize the structure in the different subspaces available to a simple read-out. In \textbf{Figure \ref{fig:hpc}b}, we report five-fold cross-validated prediction accuracy for PCA (black) and FCCA (red) over a range of projection dimensions (mean $\pm$ standard error across recording sessions and folds indicated). We found activity within FCCA subspaces to be more predictive of behavior than PCA subspaces across all dimensions, with a peak improvement of 112\% at $d=13$. This superior decoding performance additionally held consistently across each recording session individually. In the inset  of \textbf{Fig. \ref{fig:hpc}c}, we plot the total area under prediction $r^2$ curves shown for each recording session (FCCA significantly higher than PCA, **: $p < 10^{-2}, n=8$, Wilcoxon signed rank test). FCCA is a nonconvex optimization problem. In practice, we optimize the FCCA objective functions over many randomly initialized orthogonal projection matrices and choose the final solution that yields the lowest value of the cost function \ref{eq:fccaobj}. Feedback controllable subspaces therefore better capture behaviorally relevant dynamics than feedforward controllable subspaces.

To validate the robustness of these results, we repeated our analyses in two other datasets: recordings from macaque primary motor (M1 random) and primary somatosensory (S1 random) cortices during a self paced reaching task (\cite{o_doherty_2018_1473704}), and recordings from macaque primary motor cortex during a delayed reaching task (M1 maze, \cite{churchland_neural_2012}). Further details on data preprocessing are provided in the Appendix. In \textbf{Table 1}, we report the number of recording sessions ($N_r$), mean: average subspace angle between FCCA and PCA subspaces at $d=2$ ($\theta$), peak percent $\Delta$-$r^2$ of behavioral prediction, and difference in the area under the behavioral prediction curves between PCA and FCCA. In all cases, standard errors are taken across the recording sessions, and analogously to \textbf{Figure 3}, behavioral decoding was performed from $d=1$ to $d=30$. Importantly, in all datasets, FCCA performed better behavioral prediction, and the subspace angles between FCCA and PCA were substantially different from zero. In \textbf{Supplementary Figure \ref{sfig:fcca_pca_ss_supp}}, we confirm that the substantial subspace angles between FCCA and PCA are largely insensitive to the choice of T, the choice of projection dimensionality, and robust across initializations of FCCA. Further, in \textbf{Supplementary Figure \ref{sfig:Tvar_decoding}}, we verify that the superior decoding performance of FCCA subspaces hold consistently across each individual initialization.

\section{Discussion}
We presented FCCA, a novel linear dimensionality reduction method that identifies feedback controllable subspaces of neural population dynamics. The correspondence between PCA and feedforward controllability, long known in the control theory community \citep{moore_principal_1981}, but unrecognized in the neuroscience community, adds additional interpretative value to these subspaces. We demonstrated that feedforward and feedback controllable subspaces are geometrically distinct in non-normal dynamical systems, a fact of fundamental importance to the analysis of neural dynamics from cortex, where Dale's Law necessitates non-normality. Correspondingly, in electrophysiology recordings, we found large subspace angles between FCCA and PCA subspaces. Furthermore, we found that FCCA subspaces were better predictors of behavior than PCA subspaces. This suggests that targeting feedback controllable subspaces in the design of brain machine interfaces may be advantageous in terms of accuracy of behavioral prediction, the number of samples needed to calibrate predictions to a desired level of accuracy, and the efficacy of closed loop perturbations. 



Several methodological extensions to FCCA are possible. In FCCA, we rely on estimation of the regulator cost through acausal filtering (eq. \ref{eq:bKF} and estimate the filtering error through the Gaussian MMSE formula (eq. \ref{eq:fccaobj}) to keep the model of the data implicit. These correspondences only hold for linear systems under a particular choice of the LQR cost function (eq. \ref{eq:FCCA_LQR}). While this makes the method computationally efficient, it restricts the form of weight matrices in the LQR objective functions that can be considered. The objective function in eq. \ref{eq:lqg_trace} could alternatively be applied to post-hoc analysis of linear state space models fit to neural recordings \citep{gao_high-dimensional_2015}, as these models explicitly yield the system matrices required to solve the Riccati equations \ref{eq:KF} and \ref{eq:LQR}. This analysis could be combined with techniques from inverse linear optimal control \citep{priess2014solutions} to provide a more refined picture of the controllability of population dynamics. Generalization to nonlinear measures of controllability \cite{scherpen1993balancing, bouvrie2017kernel} is another direction of future work.


\bibliography{fcca}

\begin{thebibliography}{41}
\providecommand{\natexlab}[1]{#1}
\providecommand{\url}[1]{\texttt{#1}}
\expandafter\ifx\csname urlstyle\endcsname\relax
  \providecommand{\doi}[1]{doi: #1}\else
  \providecommand{\doi}{doi: \begingroup \urlstyle{rm}\Url}\fi

\bibitem[Baggio and Zampieri(2021)]{baggio_non-normality_2021}
G.~Baggio and S.~Zampieri.
\newblock Non-normality improves information transmission performance of network systems.
\newblock \emph{IEEE Transactions on Control of Network Systems}, 8\penalty0 (4):\penalty0 1846--1858, 2021.
\newblock Publisher: IEEE.

\bibitem[Bouvrie and Hamzi(2017)]{bouvrie2017kernel}
J.~Bouvrie and B.~Hamzi.
\newblock Kernel methods for the approximation of nonlinear systems.
\newblock \emph{SIAM Journal on Control and Optimization}, 55\penalty0 (4):\penalty0 2460--2492, 2017.

\bibitem[Churchland et~al.(2012)Churchland, Cunningham, Kaufman, Foster, Nuyujukian, Ryu, and Shenoy]{churchland_neural_2012}
M.~M. Churchland, J.~P. Cunningham, M.~T. Kaufman, J.~D. Foster, P.~Nuyujukian, S.~I. Ryu, and K.~V. Shenoy.
\newblock Neural population dynamics during reaching.
\newblock \emph{Nature}, 487\penalty0 (7405):\penalty0 51--56, 2012.

\bibitem[Conant and Ashby(1970)]{conant1970every}
R.~C. Conant and W.~R. Ashby.
\newblock Every good regulator of a system must be a model of that system.
\newblock \emph{International journal of systems science}, 1\penalty0 (2):\penalty0 89--97, 1970.

\bibitem[Edelman et~al.(1998)Edelman, Arias, and Smith]{edelman_geometry_1998}
A.~Edelman, T.~A. Arias, and S.~T. Smith.
\newblock The geometry of algorithms with orthogonality constraints.
\newblock \emph{SIAM journal on Matrix Analysis and Applications}, 20\penalty0 (2):\penalty0 303--353, 1998.

\bibitem[Fazel et~al.(2004)Fazel, Hindi, and Boyd]{fazel_rank_2004}
M.~Fazel, H.~Hindi, and S.~Boyd.
\newblock Rank minimization and applications in system theory.
\newblock In \emph{Proceedings of the 2004 {American} {Control} {Conference}}, volume~4, pages 3273--3278 vol.4, 2004.
\newblock \doi{10.23919/ACC.2004.1384521}.

\bibitem[Friston et~al.(2012)Friston, Samothrakis, and Montague]{friston_active_2012}
K.~Friston, S.~Samothrakis, and R.~Montague.
\newblock Active inference and agency: optimal control without cost functions.
\newblock \emph{Biological cybernetics}, 106:\penalty0 523--541, 2012.

\bibitem[Ganguli et~al.(2008)Ganguli, Huh, and Sompolinsky]{ganguli_memory_2008}
S.~Ganguli, D.~Huh, and H.~Sompolinsky.
\newblock Memory traces in dynamical systems.
\newblock \emph{Proceedings of the National Academy of Sciences}, 105\penalty0 (48):\penalty0 18970--18975, 2008.

\bibitem[Gao et~al.(2015)Gao, Busing, Shenoy, and Cunningham]{gao_high-dimensional_2015}
Y.~Gao, L.~Busing, K.~V. Shenoy, and J.~P. Cunningham.
\newblock High-dimensional neural spike train analysis with generalized count linear dynamical systems.
\newblock In C.~Cortes, N.~Lawrence, D.~Lee, M.~Sugiyama, and R.~Garnett, editors, \emph{Advances in {Neural} {Information} {Processing} {Systems}}, volume~28. Curran Associates, Inc., 2015.
\newblock URL \url{https://proceedings.neurips.cc/paper_files/paper/2015/file/9996535e07258a7bbfd8b132435c5962-Paper.pdf}.

\bibitem[Gu et~al.(2015)Gu, Pasqualetti, Cieslak, Telesford, Yu, Kahn, Medaglia, Vettel, Miller, Grafton, and Bassett]{gu_controllability_2015}
S.~Gu, F.~Pasqualetti, M.~Cieslak, Q.~K. Telesford, A.~B. Yu, A.~E. Kahn, J.~D. Medaglia, J.~M. Vettel, M.~B. Miller, S.~T. Grafton, and D.~S. Bassett.
\newblock Controllability of structural brain networks.
\newblock \emph{Nature Communications}, 6\penalty0 (1):\penalty0 8414, Oct. 2015.
\newblock ISSN 2041-1723.
\newblock \doi{10.1038/ncomms9414}.
\newblock URL \url{https://doi.org/10.1038/ncomms9414}.

\bibitem[Hennequin et~al.(2014)Hennequin, Vogels, and Gerstner]{hennequin_optimal_2014}
G.~Hennequin, T.~P. Vogels, and W.~Gerstner.
\newblock Optimal control of transient dynamics in balanced networks supports generation of complex movements.
\newblock \emph{Neuron}, 82\penalty0 (6):\penalty0 1394--1406, 2014.

\bibitem[Houde and Nagarajan(2011)]{houde_speech_2011}
J.~Houde and S.~Nagarajan.
\newblock Speech {Production} as {State} {Feedback} {Control}.
\newblock \emph{Frontiers in Human Neuroscience}, 5, 2011.
\newblock ISSN 1662-5161.
\newblock URL \url{https://www.frontiersin.org/articles/10.3389/fnhum.2011.00082}.

\bibitem[Jonckheere and Silverman(1983)]{jonckheere_new_1983}
E.~Jonckheere and L.~Silverman.
\newblock A new set of invariants for linear systems–{Application} to reduced order compensator design.
\newblock \emph{IEEE Transactions on Automatic Control}, 28\penalty0 (10):\penalty0 953--964, 1983.
\newblock \doi{10.1109/TAC.1983.1103159}.

\bibitem[Kailath(1980)]{kailath_linear_1980}
T.~Kailath.
\newblock \emph{Linear systems}, volume 156.
\newblock Prentice-Hall Englewood Cliffs, NJ, 1980.

\bibitem[Kashima(2016)]{kashima_noise_2016}
K.~Kashima.
\newblock Noise {Response} {Data} {Reveal} {Novel} {Controllability} {Gramian} for {Nonlinear} {Network} {Dynamics}.
\newblock \emph{Scientific Reports}, 6\penalty0 (1):\penalty0 27300, June 2016.
\newblock ISSN 2045-2322.
\newblock \doi{10.1038/srep27300}.
\newblock URL \url{https://doi.org/10.1038/srep27300}.

\bibitem[Kim et~al.(2018)Kim, Soffer, Kahn, Vettel, Pasqualetti, and Bassett]{kim_role_2018}
J.~Z. Kim, J.~M. Soffer, A.~E. Kahn, J.~M. Vettel, F.~Pasqualetti, and D.~S. Bassett.
\newblock Role of graph architecture in controlling dynamical networks with applications to neural systems.
\newblock \emph{Nature physics}, 14\penalty0 (1):\penalty0 91--98, 2018.

\bibitem[Kleinman(1968)]{kleinman_iterative_1968}
D.~Kleinman.
\newblock On an iterative technique for {Riccati} equation computations.
\newblock \emph{IEEE Transactions on Automatic Control}, 13\penalty0 (1):\penalty0 114--115, 1968.
\newblock \doi{10.1109/TAC.1968.1098829}.

\bibitem[{L. Ljung} and {T. Kailath}(1976)]{l_ljung_backwards_1976}
{L. Ljung} and {T. Kailath}.
\newblock Backwards {Markovian} models for second-order stochastic processes ({Corresp}.).
\newblock \emph{IEEE Transactions on Information Theory}, 22\penalty0 (4):\penalty0 488--491, July 1976.
\newblock ISSN 1557-9654.
\newblock \doi{10.1109/TIT.1976.1055570}.

\bibitem[Linderman et~al.(2016)Linderman, Miller, Adams, Blei, Paninski, and Johnson]{linderman2016recurrent}
S.~W. Linderman, A.~C. Miller, R.~P. Adams, D.~M. Blei, L.~Paninski, and M.~J. Johnson.
\newblock Recurrent switching linear dynamical systems.
\newblock \emph{arXiv preprint arXiv:1610.08466}, 2016.

\bibitem[Magnus and Neudecker(2019)]{magnus2019matrix}
J.~R. Magnus and H.~Neudecker.
\newblock \emph{Matrix differential calculus with applications in statistics and econometrics}.
\newblock John Wiley \& Sons, 2019.

\bibitem[Medaglia et~al.(2018)Medaglia, Harvey, White, Kelkar, Zimmerman, Bassett, and Hamilton]{medaglia_network_2018}
J.~D. Medaglia, D.~Y. Harvey, N.~White, A.~Kelkar, J.~Zimmerman, D.~S. Bassett, and R.~H. Hamilton.
\newblock Network {Controllability} in the {Inferior} {Frontal} {Gyrus} {Relates} to {Controlled} {Language} {Variability} and {Susceptibility} to {TMS}.
\newblock \emph{The Journal of Neuroscience}, 38\penalty0 (28):\penalty0 6399, July 2018.
\newblock \doi{10.1523/JNEUROSCI.0092-17.2018}.
\newblock URL \url{http://www.jneurosci.org/content/38/28/6399.abstract}.

\bibitem[Mitra(1969)]{mitra_wmatrix_1969}
D.~Mitra.
\newblock Wmatrix and the geometry of model equivalence and reduction.
\newblock In \emph{Proceedings of the {Institution} of {Electrical} {Engineers}}, volume 116, pages 1101--1106. IET, 1969.
\newblock Issue: 6.

\bibitem[Moore(1981)]{moore_principal_1981}
B.~Moore.
\newblock Principal component analysis in linear systems: {Controllability}, observability, and model reduction.
\newblock \emph{IEEE transactions on automatic control}, 26\penalty0 (1):\penalty0 17--32, 1981.
\newblock Publisher: IEEE.

\bibitem[Murphy and Miller(2009)]{murphy_balanced_2009}
B.~K. Murphy and K.~D. Miller.
\newblock Balanced amplification: a new mechanism of selective amplification of neural activity patterns.
\newblock \emph{Neuron}, 61\penalty0 (4):\penalty0 635--648, 2009.

\bibitem[O'Doherty et~al.(2018)O'Doherty, Cardoso, Makin, and Sabes]{o_doherty_2018_1473704}
J.~E. O'Doherty, M.~M.~B. Cardoso, J.~G. Makin, and P.~N. Sabes.
\newblock Nonhuman primate reaching with multichannel sensorimotor cortex electrophysiology, Nov 2018.
\newblock URL \url{https://doi.org/10.5281/zenodo.1473704}.

\bibitem[Pasqualetti et~al.(2013)Pasqualetti, Zampieri, and Bullo]{pasqualetti_controllability_2013}
F.~Pasqualetti, S.~Zampieri, and F.~Bullo.
\newblock Controllability {Metrics}, {Limitations} and {Algorithms} for {Complex} {Networks}.
\newblock \emph{arXiv:1308.1201}, 2013.
\newblock URL \url{http://arXiv.org/abs/1308.1201}.

\bibitem[Pezzulo and Cisek(2016)]{pezzulo_navigating_2016}
G.~Pezzulo and P.~Cisek.
\newblock Navigating the {Affordance} {Landscape}: {Feedback} {Control} as a {Process} {Model} of {Behavior} and {Cognition}.
\newblock \emph{Trends in Cognitive Sciences}, 20\penalty0 (6):\penalty0 414--424, 2016.
\newblock ISSN 1364-6613.
\newblock \doi{https://doi.org/10.1016/j.tics.2016.03.013}.
\newblock URL \url{https://www.sciencedirect.com/science/article/pii/S1364661316300067}.

\bibitem[Priess et~al.(2014)Priess, Conway, Choi, Popovich, and Radcliffe]{priess2014solutions}
M.~C. Priess, R.~Conway, J.~Choi, J.~M. Popovich, and C.~Radcliffe.
\newblock Solutions to the inverse lqr problem with application to biological systems analysis.
\newblock \emph{IEEE Transactions on control systems technology}, 23\penalty0 (2):\penalty0 770--777, 2014.

\bibitem[Rajan and Abbott(2006)]{rajan_eigenvalue_2006}
K.~Rajan and L.~F. Abbott.
\newblock Eigenvalue spectra of random matrices for neural networks.
\newblock \emph{Physical review letters}, 97\penalty0 (18):\penalty0 188104, 2006.

\bibitem[Rao and Ballard(1999)]{rao_predictive_1999}
R.~P.~N. Rao and D.~H. Ballard.
\newblock Predictive coding in the visual cortex: a functional interpretation of some extra-classical receptive-field effects.
\newblock \emph{Nature Neuroscience}, 2\penalty0 (1):\penalty0 79--87, Jan. 1999.
\newblock ISSN 1546-1726.
\newblock \doi{10.1038/4580}.
\newblock URL \url{https://doi.org/10.1038/4580}.

\bibitem[Rudin and {others}(1976)]{rudin_principles_1976}
W.~Rudin and {others}.
\newblock \emph{Principles of mathematical analysis}, volume~3.
\newblock McGraw-hill New York, 1976.

\bibitem[Sakia(1992)]{sakia1992box}
R.~M. Sakia.
\newblock The box-cox transformation technique: a review.
\newblock \emph{Journal of the Royal Statistical Society Series D: The Statistician}, 41\penalty0 (2):\penalty0 169--178, 1992.

\bibitem[Scherpen(1993)]{scherpen1993balancing}
J.~M.~A. Scherpen.
\newblock Balancing for nonlinear systems.
\newblock \emph{Systems \& Control Letters}, 21\penalty0 (2):\penalty0 143--153, 1993.

\bibitem[Summers et~al.(2016)Summers, Cortesi, and Lygeros]{summers_submodularity_2016}
T.~H. Summers, F.~L. Cortesi, and J.~Lygeros.
\newblock On {Submodularity} and {Controllability} in {Complex} {Dynamical} {Networks}.
\newblock \emph{IEEE Transactions on Control of Network Systems}, 3\penalty0 (1):\penalty0 91--101, 2016.
\newblock \doi{10.1109/TCNS.2015.2453711}.

\bibitem[Tang and Bassett(2018)]{tang_colloquium_2018}
E.~Tang and D.~S. Bassett.
\newblock Colloquium: {Control} of dynamics in brain networks.
\newblock \emph{Rev. Mod. Phys.}, 90\penalty0 (3):\penalty0 031003, Aug. 2018.
\newblock \doi{10.1103/RevModPhys.90.031003}.
\newblock URL \url{https://link.aps.org/doi/10.1103/RevModPhys.90.031003}.

\bibitem[Todorov and Jordan(2002)]{todorov_optimal_2002}
E.~Todorov and M.~I. Jordan.
\newblock Optimal feedback control as a theory of motor coordination.
\newblock \emph{Nature neuroscience}, 5\penalty0 (11):\penalty0 1226--1235, Nov. 2002.
\newblock ISSN 1097-6256.
\newblock \doi{10.1038/nn963}.

\bibitem[Trefethen and Embree(2020)]{trefethen_spectra_2020}
L.~N. Trefethen and M.~Embree.
\newblock \emph{Spectra and pseudospectra}.
\newblock Princeton university press, 2020.

\bibitem[Vinayagam et~al.(2016)Vinayagam, Gibson, Lee, Yilmazel, Roesel, Hu, Kwon, Sharma, Liu, Perrimon, and {others}]{vinayagam_controllability_2016}
A.~Vinayagam, T.~E. Gibson, H.-J. Lee, B.~Yilmazel, C.~Roesel, Y.~Hu, Y.~Kwon, A.~Sharma, Y.-Y. Liu, N.~Perrimon, and {others}.
\newblock Controllability analysis of the directed human protein interaction network identifies disease genes and drug targets.
\newblock \emph{Proceedings of the National Academy of Sciences}, 113\penalty0 (18):\penalty0 4976--4981, 2016.

\bibitem[Vyas et~al.(2020)Vyas, Golub, Sussillo, and Shenoy]{vyas_computation_2020}
S.~Vyas, M.~D. Golub, D.~Sussillo, and K.~V. Shenoy.
\newblock Computation {Through} {Neural} {Population} {Dynamics}.
\newblock \emph{Annual Review of Neuroscience}, 43\penalty0 (1):\penalty0 249--275, July 2020.
\newblock ISSN 0147-006X.
\newblock \doi{10.1146/annurev-neuro-092619-094115}.
\newblock URL \url{https://doi.org/10.1146/annurev-neuro-092619-094115}.

\bibitem[Wiener(1948)]{wiener_cybernetics_1948}
N.~Wiener.
\newblock Cybernetics.
\newblock \emph{Scientific American}, 179\penalty0 (5):\penalty0 14--19, 1948.

\bibitem[Yan et~al.(2017)Yan, Vértes, Towlson, Chew, Walker, Schafer, and Barabási]{yan_network_2017}
G.~Yan, P.~E. Vértes, E.~K. Towlson, Y.~L. Chew, D.~S. Walker, W.~R. Schafer, and A.-L. Barabási.
\newblock Network control principles predict neuron function in the {Caenorhabditis} elegans connectome.
\newblock \emph{Nature}, 550\penalty0 (7677):\penalty0 519--523, Oct. 2017.
\newblock ISSN 1476-4687.
\newblock \doi{10.1038/nature24056}.
\newblock URL \url{https://doi.org/10.1038/nature24056}.

\end{thebibliography}

\appendix
\section{Appendix}
\renewcommand{\figurename}{Supplementary Figure}
\setcounter{figure}{0}

\subsection{Details of neural datasets}

Data from the hippocampus contained recordings from a single rodent. There were a total of 8 recording sessions lasting approximately 20 minutes each with between 98-120 identified single units within each recording session. We performed our analyses on neural activity while the rat was in motion (velocity > 4 cm/s). 

The M1/S1 random dataset contained a total of 35 recording sessions from 2 monkeys (28 within monkey 1, 7 within monkey 2) spanning 17309 total reaches (13149 from monkey 1, 4160 from monkey 2). Of the 35 recording sessions, 8 included activity from S1. The number of single units in each recording session varied between 96-200 units in M1, and 86-187 in S1. The maze dataset contained 5 recording sessions recorded from 2 different monkeys comprising 10829 total reaches (8682 in monkey 3, 2147 in monkey 4). Each recording session contained 96 single units. Both datasets mapped the monkey hand location to a cursor location on the 2D task plane. For the M1/S1 random dataset, we decoded cursor velocity, whereas for the maze dataset, we decoded cursor position.

We binned spikes within the hippocampal data at 25 ms, and the M1/S1 random and M1 maze datasets at 50 ms. We then applied a boxcox transformation to binned firing rates to Gaussianize the data. A single fit of FCCA on the activity from a single recording session in the datasets considered using a desktop computer equipped with an 8 core CPU and 64 GB of memory requires < 5 seconds.

\subsection{Proof of Theorem 2}

In this section, we prove the equivalence of the solutions of the FFC (eq. \ref{eq:ffc_obj}) and FBC objective functions (eq. \ref{eq:lqg_trace}) when system dynamics are stable and symmetric. We focus on symmetric matrices as the requirement that dynamics be stable (i.e., all eigenvalues of the dynamics $A$ must have negative real part) essentially reduces the space of normal matrices to that of symmetric matrices. We reproduce these objective functions for convenience:

\begin{align*}
    &C_\text{FFC}:\;\;\text{argmax}_C \log \det C \Pi C^\top \\
    &C_{\text{FBC}}:\;\; \text{argmin}_C \text{Tr}(P Q)
\end{align*}

We prove this theorem when the matrix $P$ in the FBC objective function arises from the canonical LQR loss function:

$$\min_u \left\{\lim_{T \to \infty}  \mathbb{E} \left[\frac{1}{T} \int_0^T x^\top x + u^\top u \; dt \right],\;\; x(0) = x_0, u \in L^2[0, \infty)\right\}
$$ 

\noindent and not the variant given in eq. \ref{eq:FCCA_LQR}. When calculating FBC from data within FCCA, we must use the latter LQR loss function as it maps onto acausal filtering, and therefore may be estimated from data. Recall from the discussion below eq. \ref{eq:backwards} that within the FFC objective function, we assess controllability when the output/observation matrix $C$ is used as the input matrix for the regulator signal (i.e., we make the relabeling $B^\top \to C$. We further work under the assumption that the input matrix $B$ to the open loop system is equal to the identity. The open loop dynamics of $x(t)$ are then given by:

\begin{align}
\dot{x} = A x(t) + u(t) 
\label{eq:lds2}
\end{align}

\noindent where $u(t)$ has the same dimensionality as $x(t)$, and is uncorrelated with the past of $x(t)$ (i.e. $u(t) \perp x(\tau), \tau < t$). Formally, $u(t)$ represents the innovations process of $x(t)$. The equations for $Q$ (corresponding to the Kalman Filter, eq. \ref{eq:KF}) and the equation for $P$ (corresponding to the LQR, eq. \ref{eq:LQR}) reduce to the following:

\begin{align}
    AQ + Q A + I_N - Q C^\top C Q &= 0\\
    A P + P A + I_N - P C C^\top P &= 0
\end{align}

where $I_N$ denotes the $N\times N$ identity matrix. 

We observe that under the stated assumptions, the Riccati equations for $Q$ and $P$ actually coincide, and thus the FBC objective function reads $Tr(Q^2)$. We will show that both FFC and FBC objective functions achieve local optima for some fixed projection dimension $d$ when the projection matrix $C$ coincides with a projection onto the eigenspace spanned by the $d$ eigenvalues of A with largest real part, which we denote as $V_d$. In fact, in the case of the FFC objective function, the eigenspace corresponds to a global optimum. For the FBC objective function, we are able to establish global optimality rigorously only for the $2D \to 1D$ dimension reduction. 

We briefly outline the proof strategy. First, we will prove the optimality of $V_d$ for the FFC objective function in section S1.9.1 by showing that (i) $V_d$ is an eigenvector of $\Pi$ in the case when $A$ is symmetric and (ii) relying on the Poincare Separation Theorem. Then, in section S1.9.2, we will prove that $V_d$ is a critical point of the FBC objective function. The proof relies on an iterative technique to solve the Riccati equation. These iterates form a recursively defined sequence that provide increasingly more accurate approximations to the FBC objective function that converge in the limit. Treating these iterative approximations of the FBC objective function as a function of $C$, we show that $V_d$ is a critical point of all iterates, and thus in the limit, $V_d$ is a critical point of the FBC objective function.

\paragraph{FFC Objective Function}
\begin{lemma} For $B=I_N, A = A^\top, A \in \mathbb{R}^{N \times N}$, with all eigenvalues of $A$ distinct and $\max \text{Re}(\lambda(A)) < 0$, the optimal solution for the feedforward controllability objective function for projection dimension $d$ coincides with $V_d$, the matrix whose rows are formed by the eigenvectors corresponding to the $d$ eigenvalues of $A$ with largest real value.
\end{lemma}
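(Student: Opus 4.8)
The plan is to reduce this to two classical ingredients: (i) that for symmetric, stable $A$ with $B=I_N$ the controllability Gramian $\Pi$ is co-diagonalizable with $A$, so that the "top-$d$ eigenspace of $\Pi$" is exactly $\mathrm{span}(V_d)$; and (ii) that maximizing $\log\det C\Pi C^\top$ over the Stiefel constraint $CC^\top = I_d$ is solved by projecting onto the top-$d$ eigenspace of $\Pi$, via the Poincaré separation theorem referenced in the outline.

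First I would solve the Lyapunov equation explicitly. Taking the spectral decomposition $A = U\Lambda U^\top$ with $\Lambda = \mathrm{diag}(\lambda_1,\dots,\lambda_N)$, eq.~\ref{eq:lyapunov} with $B=I_N$ and $A=A^\top$ reads $A\Pi + \Pi A = -I_N$; conjugating by $U$ and writing $\tilde\Pi = U^\top\Pi U$ gives the componentwise relation $(\lambda_i + \lambda_j)\tilde\Pi_{ij} = -\delta_{ij}$. Since the $\lambda_i$ are distinct and strictly negative, $\lambda_i + \lambda_j \neq 0$ for all $i,j$, forcing $\tilde\Pi$ to be diagonal with $\tilde\Pi_{ii} = -1/(2\lambda_i) > 0$. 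Hence $\Pi = U\,\mathrm{diag}\!\big(-1/(2\lambda_1),\dots,-1/(2\lambda_N)\big)\,U^\top$ is positive definite and shares its eigenvectors with $A$; because $t \mapsto -1/(2t)$ is strictly increasing on $(-\infty,0)$, the $d$ largest eigenvalues of $\Pi$ correspond precisely to the $d$ eigenvalues of $A$ with largest real part, i.e. the top-$d$ eigenspace of $\Pi$ is $\mathrm{span}(V_d)$.

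Next I would invoke the Poincaré separation theorem: for symmetric $\Pi$ with eigenvalues $\sigma_1 \geq \cdots \geq \sigma_N > 0$ and any $C \in \mathbb{R}^{d\times N}$ with $CC^\top = I_d$, the eigenvalues $\theta_1 \geq \cdots \geq \theta_d$ of $C\Pi C^\top$ interlace those of $\Pi$, so in particular $\theta_i \leq \sigma_i$ for each $i$. Since $\Pi \succ 0$ and $C$ has full row rank, $C\Pi C^\top \succ 0$, so $\log\det C\Pi C^\top = \sum_{i=1}^d \log\theta_i \leq \sum_{i=1}^d \log\sigma_i$, a bound independent of $C$. Taking $C = V_d$ yields $V_d \Pi V_d^\top = \mathrm{diag}(\sigma_1,\dots,\sigma_d)$, which attains the bound, and the distinctness of the eigenvalues of $A$ (hence of $\Pi$) makes the maximizing $d$-dimensional subspace unique. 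Thus $V_d$ is the optimal solution of eq.~\ref{eq:ffc_obj}, in the sense that its row space is the unique maximizer.

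I do not expect a genuine obstacle: the argument is a textbook Gramian computation followed by eigenvalue interlacing. The only points needing care are justifying that the Lyapunov solution is diagonal in the eigenbasis of $A$ (using symmetry, stability, and distinctness of the $\lambda_i$) and checking positivity of $\Pi$ and of $C\Pi C^\top$ so that the objective in eq.~\ref{eq:ffc_obj} is well defined on the whole constraint set. The substantive analytic difficulty is deferred to the companion argument for the feedback controllability objective $\mathrm{Tr}(PQ)$, where no closed-form optimizer exists and one must instead work with the iterates of the Riccati recursion.
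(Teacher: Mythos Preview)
Your proposal is correct and follows essentially the same route as the paper: compute $\Pi$ explicitly in the eigenbasis of $A$ (the paper does this via the integral representation $\Pi=\int_0^\infty e^{2At}\,dt$, you via the algebraic Lyapunov relation $(\lambda_i+\lambda_j)\tilde\Pi_{ij}=-\delta_{ij}$, both yielding $\Pi=U\,\mathrm{diag}(-1/(2\lambda_i))\,U^\top$), then apply the Poincar\'e separation theorem to bound $\log\det C\Pi C^\top$ and verify that $C=V_d$ attains the bound. Your added remark on uniqueness of the maximizing subspace via distinctness of the eigenvalues is a small refinement not stated explicitly in the paper.
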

\emph{Proof}

The FFC objective function reads:
\begin{align}
\text{argmax}_C \log \det C \Pi C^\top \;\; | \;\; C \in \mathbb{R}^{d \times N}, C C^\top = I_d \label{eq:ffc_obj_proof}
\end{align}

We first re-write $\Pi$:

\begin{align}
    &\Pi = \int_0^\infty dt \; e^{A t} BB^\top e^{A^\top t} = \int_0^\infty dt \; e^{2 A t}\nonumber
\end{align}

Let $A = U \Lambda U^\top$ denote the eigenvalue decomposition of $A$. Recall that since $A=A^\top$, $U$ is orthogonal. Then we can write:

\begin{align*}
    \Pi &= U \int_0^\infty dt e^{2 \Lambda t} U^\top\\
        &= \frac{1}{2} U DU^\top
\end{align*}

where $D$ is a diagonal matrix with diagonal entries $\{\frac{1}{-\lambda_1}, \frac{1}{-\lambda_2}, ..., \frac{1}{-\lambda_N} \}$ being the eigenvalues of $\Pi$. We conclude that the matrix $\Pi$ has the same eigenbasis as $A$. Also, since all $\lambda_j$ are real and negative, the ordering of the eigenvalues is preserved ($\lambda_i > \lambda_j$ implies $-\frac{1}{\lambda_i} > -\frac{1}{\lambda_j}$). 

That $V_d$ solves 15 follows from the Poincare separation theorem, which we restate for convenience:
\begin{prop} Poincare Separation Theorem (\cite{magnus2019matrix}, 11.10)

Let $M$ be any square, symmetric matrix, and let $\mu_1 \geq \mu_2 \geq \hdots \geq \mu_N$ be its eigenvalues. Let $C \in \mathbb{R}^{d \times N}$ be a semi-orthogonal matrix (i.e., $C C^\top = I_d$). Then, the eigenvalues $\eta_1 \geq \eta_2 \geq \hdots \geq \eta_d$ of $C M C^\top$ satisfy:

$$ \mu_i \geq \eta_i \geq \mu_{N - d + i}$$

\end{prop}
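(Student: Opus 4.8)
The plan is to prove the Poincar\'e Separation Theorem through the Courant--Fischer min--max characterization of eigenvalues, exploiting the fact that the constraint $CC^\top = I_d$ forces $C^\top$ to be a linear isometry from $\mathbb{R}^d$ onto the $d$-dimensional subspace $W := C^\top \mathbb{R}^d \subseteq \mathbb{R}^N$. The single identity driving everything is that for $v \in \mathbb{R}^d$ one has $\|C^\top v\|^2 = v^\top C C^\top v = v^\top v$ and $v^\top (C M C^\top) v = (C^\top v)^\top M (C^\top v)$, so Rayleigh quotients of $CMC^\top$ on $\mathbb{R}^d$ coincide exactly with Rayleigh quotients of $M$ restricted to the subspace $W$.

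First I would record both forms of Courant--Fischer for a symmetric matrix $M$ with eigenvalues $\mu_1 \ge \cdots \ge \mu_N$:
\[
\mu_k = \max_{\dim S = k}\ \min_{0 \ne v \in S} \frac{v^\top M v}{v^\top v} = \min_{\dim S = N-k+1}\ \max_{0 \ne v \in S} \frac{v^\top M v}{v^\top v},
\]
and apply them to the $d \times d$ matrix $CMC^\top$, whose eigenvalues are $\eta_1 \ge \cdots \ge \eta_d$. Pushing the subspaces appearing in these variational formulas through the isometry $C^\top$ turns an $\ell$-dimensional subspace $S \subseteq \mathbb{R}^d$ into an $\ell$-dimensional subspace $T = C^\top S \subseteq W$, and conversely, so that
\[
\eta_i = \max_{\substack{T \subseteq W \\ \dim T = i}}\ \min_{0 \ne w \in T} \frac{w^\top M w}{w^\top w} = \min_{\substack{T \subseteq W \\ \dim T = d-i+1}}\ \max_{0 \ne w \in T} \frac{w^\top M w}{w^\top w}.
\]

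The two bounds then follow by comparing these constrained extremal problems to the unconstrained ones defining the $\mu$'s. For the upper bound, the max--min expression for $\eta_i$ is a maximum over a \emph{sub}-collection (those $i$-dimensional subspaces lying inside $W$) of the subspaces appearing in the max--min for $\mu_i$, so $\eta_i \le \mu_i$. For the lower bound, the min--max expression for $\eta_i$ is a minimum over a sub-collection of the $(d-i+1)$-dimensional subspaces appearing in the min--max for $\mu_{N-(d-i+1)+1} = \mu_{N-d+i}$; restricting a minimum to a smaller family only increases it, giving $\eta_i \ge \mu_{N-d+i}$. There is no deep obstacle here — this is a classical fact — so the one thing to get right is the index bookkeeping: matching $k$ across the two Courant--Fischer forms, and checking that enlarging the family of competitor subspaces raises a max but lowers a min (with the outer operation in the dual form flipping this accordingly).

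Finally, to see that this delivers the FFC lemma, I would specialize to $M = \Pi = \tfrac{1}{2} U D U^\top$, which by the preceding computation is symmetric positive definite and shares the eigenbasis of $A$, with eigenvalues ordered exactly as $\mathrm{Re}\,\lambda_1 \ge \cdots \ge \mathrm{Re}\,\lambda_N$. The Proposition gives, for every semi-orthogonal $C$, $\log\det(C\Pi C^\top) = \sum_{i=1}^d \log \eta_i \le \sum_{i=1}^d \log \mu_i$ (using $\eta_i \le \mu_i$ and monotonicity of $\log$), and this upper bound is attained by $C = V_d$, since $V_d \Pi V_d^\top = \mathrm{diag}(\mu_1,\dots,\mu_d)$. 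Hence $V_d$ globally maximizes the feedforward controllability objective at projection dimension $d$, as claimed.
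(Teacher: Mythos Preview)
Your proof is correct and is the standard Courant--Fischer argument for the Poincar\'e Separation Theorem. Note, however, that the paper does not prove this proposition at all: it is quoted verbatim as a known result with a citation to \cite{magnus2019matrix}, Section~11.10, and is then used as a black box to conclude Lemma~1. So there is no ``paper's own proof'' to compare against; you have supplied a self-contained argument where the paper simply appeals to the literature.

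Your subsequent application to the FFC objective (the final paragraph) matches the paper's use of the proposition exactly: the paper also deduces $\log\det(C\Pi C^\top)\le\sum_{i=1}^d\log\mu_i$ from $\eta_i\le\mu_i$, and then checks attainment at $C=V_d$ by the explicit identification $\Pi=\tfrac12 UDU^\top$ sharing the eigenbasis of $A$. One minor point worth making explicit in your write-up is that monotonicity of $\log$ requires $\eta_i>0$, which follows because $\Pi$ is positive definite and $C$ has full row rank; you implicitly use this but do not state it.
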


In particular, Proposition 1 implies that $\det C M C^\top = \prod_{i = 1}^d \eta_i \leq \prod_{i = 1}^d \mu_i$, and hence $\log \det C M C^\top \leq \sum_{i = 1}^d \log \mu_i$ We now show that this inequality is satisfied with equality when $C = V_d$. Consider the optimization problem

\begin{align}
\text{argmax}_C \log \det C M C^\top \;\; | \;\; C \in \mathbb{R}^{d \times N}, C C^\top = I_d
\label{eq:lemmaprob1}
\end{align}

Let $M = U \Gamma U^\top$ be the eigendecomposition of $M$. We can equivalently parameterize the optimization problem as:

\begin{align}
\text{argmax}_{\tilde{C}} \log \det \tilde{C} \Gamma \tilde{C}^\top \;\; | \;\; \tilde{C} \in \mathbb{R}^{d \times N}, \tilde{C} \tilde{C}^\top = I_d
\label{eq:lemmaprob2}
\end{align}

The solution to the original problem, eq. \ref{eq:lemmaprob1}, can be recovered from setting $C = \tilde{C} U^\top$. Now, assume (without loss of generality) that we have arranged the values of $\Gamma$ so that the largest d eigenvalues, $\mu_1, ..., \mu_d$, occur first. We observe that the choice of $\tilde{C} = [I_d ; \mathbf{0}_{N - d, N- d}] \equiv \tilde{C}_*$, which picks out these first d elements of the diagonal of $\Gamma$, yields $\log \det \tilde{C}^\top_{*} \Gamma \tilde{C}_*  = \sum_{i = 1}^d \log \mu_i$, and hence solves the desired optimization problem. It follows that $C_* = \tilde{C}_* U^\top = V_d$ \\

To complete the proof of Lemma 1, we substitute $M$ with $\Pi$, and the eigenvalues $\mu_i$ with $-1/\lambda_i$ (the eigenvalues of $\Pi$, expressed in terms of the eigenvalues of $A$). 

$\;\square$\\
\\
\noindent\textbf{FBC Objective Function}\\ \\
For the case of the FBC objective function, we show that projection matrices of rank $d$ that align with the $d$ slowest eigenmodes of $A$ constitute local minima of the objective function. We rely on two simplifying features of the problem. First, the FBC objective function is invariant to the choice of basis in the state space. We therefore work within the eigenbasis of $A$, as within this basis, the system defined by eq. \ref{eq:lds2} decouples into $n$ non-interacting scalar dynamical systems. Additionally, we rely on the fact that the FBC objective function is also invariant to coordinate transformations within the projected space. In other words, the choice of coordinates in which we express $y$ also makes no difference. Without loss of generality then, we may treat the problem in a basis where $A$ is diagonal with entries given by its eigenvalues and $C$ is an orthonormal projection matrix (i.e. $C C^\top = I_d$). A restatement of the latter condition is that $C$ belongs to the Steifel manifold of $N \times d$ matrices: $\Omega \equiv \{C \in \mathbb{R}^{N \times d} | C C^\top = I_d \}$.

\begin{lemma}
For $B=I_N, A = A^\top, A^{N\times N}$, with all eigenvalues of $A$ distinct and $\max \text{Re}(\lambda(A)) < 0$, the projection matrix onto the eigenspace spanned by the $d$ eigenvalues of $A$ with largest real value constitutes a critical point of the LQG trace objective function on $\Omega$
\end{lemma}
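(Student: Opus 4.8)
The plan is to exploit the structural collapse the hypotheses provide. As noted in the excerpt, for $B=I_N$ and $A=A^\top$ the filtering Riccati equation and the (relabeled) regulator Riccati equation coincide, so the stabilizing solutions satisfy $P=Q$ and the objective is simply $f(C)=\mathrm{Tr}(Q^2)$ with $Q=Q(C)$ the stabilizing solution of $AQ+QA+I_N-QC^\top C\,Q=0$. Since $f$ is invariant under orthogonal changes of coordinates inside the readout space, it depends on $C$ only through the orthogonal projector $\Pi(C)=C^\top C$ onto the $d$-dimensional subspace $\mathrm{row}(C)$. First I would pass to the eigenbasis of $A$, so that $A=\Lambda=\mathrm{diag}(\lambda_1,\dots,\lambda_N)$ with $0>\lambda_1>\dots>\lambda_N$ and the candidate point is $C_*=V_d$ with $\Pi_*:=\Pi(C_*)=\mathrm{diag}(I_d,0)$. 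At $C_*$ the Riccati equation decouples into scalar equations, so $Q_*:=Q(C_*)$ is \emph{diagonal}, with $[Q_*]_{ii}=\lambda_i+\sqrt{\lambda_i^2+1}$ for $i\le d$ and $[Q_*]_{ii}=-1/(2\lambda_i)$ for $i>d$. The diagonality of $Q_*$ is the single fact on which the whole argument turns.

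Granting for the moment that $C\mapsto Q(C)$ is differentiable near $C_*$, the criticality computation is short. A tangent vector to the constraint $CC^\top=I_d$ at $C_*$ splits into a skew-symmetric piece, which only rotates within $\mathrm{row}(C_*)$ and leaves $f$ unchanged, and a complementary piece; a one-line computation shows the latter produces a perturbation $\delta\Pi$ of the projector that is \emph{block-off-diagonal} with respect to the $d\,|\,(N-d)$ splitting (zero in both diagonal blocks). Differentiating the Riccati equation at $C_*$ yields a Sylvester equation $M\,\delta Q+\delta Q\,M=Q_*\,\delta\Pi\,Q_*$, where $M=\Lambda-\Pi_*Q_*$ is diagonal with strictly negative diagonal entries, so the Sylvester operator is invertible and $\delta Q$ is uniquely determined. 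Because $Q_*$ is diagonal and $\delta\Pi$ is block-off-diagonal, the right-hand side is block-off-diagonal; since $M$ is diagonal, the entrywise solution inherits exactly that zero pattern, so $\delta Q$ has a vanishing diagonal. Hence $\tfrac{d}{d\epsilon}\mathrm{Tr}(Q^2)=2\,\mathrm{Tr}(Q_*\,\delta Q)=2\sum_i[Q_*]_{ii}[\delta Q]_{ii}=0$ along every tangent direction, i.e.\ $C_*$ is a critical point of $f$ on $\Omega$.

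The one delicate point — and where I expect most of the write-up to go — is justifying the differentiability of $Q(\cdot)$ near $C_*$ and the legitimacy of differentiating through the Riccati equation. The cleanest route is the implicit function theorem applied to $F(Q,C)=\Lambda Q+Q\Lambda+I_N-QC^\top C\,Q$: it is polynomial, $F(Q_*,C_*)=0$, and its partial derivative in $Q$ at $(Q_*,C_*)$ is precisely the invertible Sylvester operator $X\mapsto MX+XM$ above, yielding a smooth local branch $Q(C)$ that coincides with the stabilizing branch by continuity. An alternative that is more self-contained — and probably the one I would present — builds $Q$ by the Newton--Kleinman iteration from the stabilizing seed $Q_0=0$ (legitimate since $\Lambda$ is Hurwitz): each iterate $Q_{k+1}$ solves a Lyapunov equation whose coefficients depend smoothly on $(Q_k,\Pi)$, hence is smooth in $C$; an induction on $k$ using the same block-structure bookkeeping (a diagonal coefficient matrix maps block-off-diagonal data to block-off-diagonal solutions) shows $Q_{k,*}$ is diagonal and $\delta Q_k$ has zero diagonal for every $k$, so $C_*$ is critical for every $\mathrm{Tr}(Q_k^2)$; finally the monotone, quadratic convergence of Kleinman's iteration, uniform together with first derivatives on a neighborhood of $C_*$, transfers criticality to the limit $f=\mathrm{Tr}(Q^2)$. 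Everything else — the decoupling in the eigenbasis, the description of the Grassmann tangent directions, and the support-preservation property of diagonal Sylvester operators — is routine.
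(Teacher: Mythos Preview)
Your proposal is correct and rests on the same structural observation the paper uses: in the eigenbasis of $A$, the stabilizing solution $Q_*$ at $C_*$ is diagonal, and a first-order perturbation of $C$ along any Stiefel tangent direction produces only \emph{off-diagonal} changes in $Q$, so $\mathrm{Tr}(Q^2)$ is stationary. Where you differ is in execution. The paper works through the Newton--Kleinman iterates $Q_m$, picks an explicit basis $\{M_{ij},T_{kl}\}$ for the tangent space, does an element-by-element perturbative expansion $q_{ab}=q_{ab}^{(0)}+\alpha q_{ab}^{(1)}+O(\alpha^2)$ for each basis direction (splitting into the cases $k\in\{i,j\}$ versus $k\notin\{i,j\}$), verifies that every diagonal $q^{(1)}$ vanishes, and then invokes a uniform-convergence-of-derivatives lemma to pass from $\mathrm{Tr}(Q_m^2)$ to $\mathrm{Tr}(Q^2)$. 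Your primary route---apply the implicit function theorem to $F(Q,C)=\Lambda Q+Q\Lambda+I-Q\Pi(C)Q$, note that $\partial_Q F$ at $(Q_*,C_*)$ is the invertible diagonal Sylvester operator $X\mapsto MX+XM$, and read off that $\delta Q$ inherits the block-off-diagonal pattern of $Q_*\,\delta\Pi\,Q_*$---handles all tangent directions at once and bypasses both the case analysis and the limit argument. Your observation that $f$ depends on $C$ only through the projector $\Pi=C^\top C$, and that $\delta\Pi$ is automatically block-off-diagonal at $C_*$ (because the skew part of the tangent contributes zero to the top-left block), is exactly the conceptual shortcut the paper's computation is implicitly verifying coordinate-by-coordinate. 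Your alternative Newton--Kleinman route, with the inductive block-structure bookkeeping, is essentially a cleaned-up version of what the paper actually writes down.
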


\emph{Proof}
Explicitly calculating the gradient of the solution of the Riccati equation is analytically intractable for $n > 1$, and so we we will rely on the analysis of an iterative procedure to solve the Riccati equation via Newton's method, known as the Newton-Kleinmann (NK) iterations \citep{kleinman_iterative_1968}. These iterations are described in the following proposition:

\begin{prop} 
Consider the Riccati equation $0 = A Q + Q A^\top + B B^\top - Q C^\top C Q$. Let $Q_m, m = 1, 2, ...$ be the unique positive definite solution of the Lyapunov equation:

\begin{align}
0 = A_k Q_m + Q_m A_k^\top + B B^\top + Q_{m - 1} C^\top C Q_{m - 1}
\label{eq:lyapunoq_nk}
\end{align}
where $A_k = A - C^\top C Q_{m - 1}$, and where $Q_0$ is chosen such that $A_1$ is a stable matrix (i.e. all real parts of its eigenvalues are $< 0$). For two positive semidefinite matrices $M, N$, we denote $M \geq N$ if the difference $M - N$ remains positie semidefinite. Then:

\begin{enumerate}
    \item $Q \leq Q_{m + 1} \leq Q_m \leq ... , k = 0, 1$
    \item $\lim_{k \to \infty} Q_m = Q$
\end{enumerate}
\end{prop}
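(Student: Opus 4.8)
This is the classical Kleinman monotone‑convergence result for Newton's method applied to the algebraic Riccati equation \ref{eq:KF}, and I would prove it along the standard lines. Write $\mathcal{R}(X) = AX + XA^\top + BB^\top - XC^\top CX$ for the Riccati operator, so that $Q$ is the stabilizing solution of $\mathcal{R}(Q)=0$, and let $A_k^{(m-1)} = A - Q_{m-1}C^\top C$ be the closed‑loop coefficient appearing in \ref{eq:lyapunoq_nk}. Two facts drive everything: the comparison principle for Lyapunov equations (if $\tilde A$ is Hurwitz, $\tilde A X + X\tilde A^\top + W = 0$ and $W \geq 0$, then $X \geq 0$; and $W' \geq W$ forces the corresponding $X' \geq X$), and an identity evaluating the Riccati residual at each iterate.

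First I would show the recursion is well posed by induction on $m$, the inductive hypothesis being that $A_k^{(m-1)}$ is Hurwitz. Given this, \ref{eq:lyapunoq_nk} has the unique solution
\begin{align*}
Q_m = \int_0^\infty e^{A_k^{(m-1)} t}\big(BB^\top + Q_{m-1}C^\top C Q_{m-1}\big)\,e^{(A_k^{(m-1)})^\top t}\,dt,
\end{align*}
which is positive definite (immediate when $B=I_N$, since then $BB^\top = I_N > 0$; in the general statement one invokes controllability of the relevant pair). The base case is exactly the hypothesis that $Q_0$ is chosen so that $A_k^{(0)}$ is stable. Closing the induction — i.e. showing $A_k^{(m)} = A - Q_m C^\top C$ is again Hurwitz — is the one genuinely delicate point, and I return to it below.

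The key identity: substituting $Q_m$ into $\mathcal{R}$ and using \ref{eq:lyapunoq_nk} to eliminate the $AQ_m + Q_m A^\top$ terms, a short calculation gives, for every $m \geq 1$,
\begin{align*}
\mathcal{R}(Q_m) = -(Q_m - Q_{m-1})\,C^\top C\,(Q_m - Q_{m-1}) \;\leq\; 0 ,
\end{align*}
so each iterate is a supersolution of the Riccati equation. Subtracting $\mathcal{R}(Q)=0$ and rewriting $\mathcal{R}(Q_m)-\mathcal{R}(Q)$ as a Lyapunov expression in the increment $Q_m - Q$ with the (Hurwitz) closed‑loop coefficient $A - Q_m C^\top C$, the comparison principle gives $Q_m \geq Q$; the same Lyapunov identity, read as an inequality with positive semidefinite right‑hand side together with detectability of $(A,C)$, shows $A - Q_m C^\top C$ is Hurwitz, which closes the well‑posedness induction. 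Monotone decrease $Q_{m+1} \leq Q_m$ then follows by comparing the Lyapunov equations \ref{eq:lyapunoq_nk} for $Q_{m+1}$ and $Q_m$ (equivalently, applying one Newton step to the supersolution $Q_m$): both have closed‑loop coefficient $A - Q_m C^\top C$, and the difference of their forcing terms is precisely $(Q_m - Q_{m-1})C^\top C(Q_m - Q_{m-1}) \geq 0$. This gives claim 1, the chain $Q \leq \cdots \leq Q_{m+1} \leq Q_m \leq \cdots \leq Q_1$, with $Q_0$ serving only as a stabilizing initializer.

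For claim 2, the sequence $(Q_m)$ is monotone decreasing in the Loewner order and bounded below by $Q \geq 0$, hence converges to some $Q_\infty \geq Q$; passing to the limit in \ref{eq:lyapunoq_nk} (every term is continuous in $Q_{m-1}$ and $A_k^{(m-1)} \to A - Q_\infty C^\top C$) shows $\mathcal{R}(Q_\infty)=0$, and since $A - Q_\infty C^\top C$ is a limit of Hurwitz matrices and, by detectability, cannot be merely marginally stable, $Q_\infty$ is the stabilizing solution, which is unique, so $Q_\infty = Q$. The main obstacle, as flagged, is the Hurwitz‑preservation step: keeping each $A - Q_m C^\top C$ (and the limiting matrix) strictly stable is what makes every Lyapunov equation in the scheme uniquely solvable with the convergent integral representation above, and it is the only place the stabilizability/detectability hypotheses are used — here they hold automatically, since $B=I_N$ makes $(A,B)$ controllable and the modes of $(A,C)$ unobservable for the slow‑subspace readout are precisely the fast, hence stable, eigenmodes of $A$. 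Everything else is a mechanical application of the comparison principle to the residual identity.
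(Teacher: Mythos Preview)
The paper does not actually prove this proposition: it is stated as a known result, attributed to Kleinman (1968), and used as a black box inside the proof of Lemma~2. So there is no ``paper's own proof'' to compare against. Your proposal is a correct reconstruction of the classical Kleinman argument --- the residual identity $\mathcal{R}(Q_m) = -(Q_m - Q_{m-1})C^\top C(Q_m - Q_{m-1})$, the Lyapunov comparison principle, the inductive Hurwitz preservation, and monotone convergence --- and is exactly the standard proof one finds in the control literature.

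One minor remark: you write the closed-loop coefficient as $A - Q_{m-1}C^\top C$, whereas the paper's statement has $A_k = A - C^\top C Q_{m-1}$. Your version is the correct one for the filter Riccati equation $AQ + QA^\top + BB^\top - QC^\top C Q = 0$ (it is what makes the residual identity come out cleanly), and the paper's form appears to be a typo; in the symmetric-$A$, diagonal-$Q_m$ setting where the proposition is actually applied, the two happen to coincide, which is presumably why the discrepancy went unnoticed.
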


Thus the $Q_m$ iteratively approach the solution of the Riccati equation from above. Since in our case, the Riccati equations for $P$ and $Q$ coincide, an identical sequence $P_k$ can be constructed using analogous NK iterations that approaches $P$ from above. From this, it follows that $\lim_{k \to \infty} Tr(Q_m P_k) = \lim_{k \to \infty} Tr(Q_m^2) = Tr(Q^2)$. We then use the fact that in addition to the $Q_m$ converging to $Q$, the sequence $\nabla_C \text{Tr}\left(Q_m^2 \right)$ converges to $\nabla_C \text{Tr}(Q^2)$ as $k \to \infty$, where $\nabla_C$ denotes the gradient with respect to $C$. This is rigorously established in the following lemma, which is the multivariate generalization of Theorem 7.17 from \citep{rudin_principles_1976}:

\begin{lemma} Suppose $\{f_m\}$ is a sequence of functions differentiable on an interval $h \subset \mathcal{H}$, where $\mathcal{H}$ is some finite-dimensional vector space, such that $\{f_m(x_0) \}$ converges for some point $x_0 \in h$. If $\{\nabla f_m(x_0)\}$ converges uniformly in $h$, then $\{f_m\}$ converges uniformly on $I$, to a function $f$, and
$$\nabla f(x) = \lim_{m \to \infty} \nabla f_m(x) \quad x \in h $$
\end{lemma}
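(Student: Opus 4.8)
The plan is to transplant the classical one-variable argument (Rudin, Theorem 7.17) to the finite-dimensional setting, using that each $f_m$ is \emph{real}-valued so that the exact mean value theorem holds along line segments in $\mathcal{H}$. Since uniform convergence is a local statement, I would first reduce to the case where $h$ is a bounded convex set (say an open ball), any $h$ being covered by such sets on which the conclusions patch together; in the intended application $h$ lies inside the compact Stiefel manifold $\Omega$, so boundedness is automatic. The first step is then to upgrade the pointwise hypotheses into uniform convergence of $\{f_m\}$ itself. Given $\epsilon>0$, choose $N$ so that $\|\nabla f_n-\nabla f_m\|<\epsilon$ everywhere on $h$ and $|f_n(x_0)-f_m(x_0)|<\epsilon$ for $n,m\ge N$. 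Applying the mean value theorem to the scalar function $f_n-f_m$ along the segment from $x_0$ to an arbitrary $x\in h$ bounds $|(f_n-f_m)(x)-(f_n-f_m)(x_0)|$ by $\epsilon\,\mathrm{diam}(h)$; combined with the bound at $x_0$ this shows $\{f_m\}$ is uniformly Cauchy, hence converges uniformly to some $f:h\to\mathbb{R}$. The hypothesis on the gradients already gives $\nabla f_m\to g$ uniformly on $h$ for some $g:h\to\mathcal{H}$.

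The substantive step is to show $f$ is differentiable at each $x\in h$ with $\nabla f(x)=g(x)$. Fix $x$ and $\epsilon>0$, and pick $N$ so that $\|\nabla f_n-g\|\le\epsilon$ and $\|\nabla f_n-\nabla f_m\|\le\epsilon$ on $h$ for $n,m\ge N$. The mean value theorem applied to $f_n-f_m$ on the segment $[x,x+u]$ gives $|(f_n-f_m)(x+u)-(f_n-f_m)(x)|\le\epsilon\|u\|$; fixing $n=N$ and letting $m\to\infty$ controls $(f-f_N)(x+u)-(f-f_N)(x)$ by $\epsilon\|u\|$. Then the triangle inequality
\begin{align*}
& f(x+u)-f(x)-g(x)^\top u \;=\; \big[(f-f_N)(x+u)-(f-f_N)(x)\big] \\
& \qquad +\, \big[f_N(x+u)-f_N(x)-\nabla f_N(x)^\top u\big] \;+\; \big[\nabla f_N(x)-g(x)\big]^\top u
\end{align*}
bounds the left-hand side by $2\epsilon\|u\|$ plus the differentiability remainder of the \emph{single} function $f_N$ at $x$, which is $\le\epsilon\|u\|$ once $\|u\|<\delta$ for an appropriate $\delta>0$. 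Hence $|f(x+u)-f(x)-g(x)^\top u|\le 3\epsilon\|u\|$ for $\|u\|<\delta$; since $\epsilon$ is arbitrary, $f$ is differentiable at $x$ with $\nabla f(x)=g(x)=\lim_{m\to\infty}\nabla f_m(x)$.

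I expect the differentiability step to be the only real obstacle: the interchange $\nabla(\lim_m f_m)=\lim_m\nabla f_m$ is false without the uniform hypothesis on the gradients, and the argument must carefully use the mean value theorem to convert uniform control of gradient \emph{differences} into uniform control of the increments $(f_n-f_m)(x+u)-(f_n-f_m)(x)$ — this is precisely where both hypotheses (uniform convergence of $\{\nabla f_m\}$ and differentiability of each $f_m$, which supplies the single $\delta$ for the fixed index $N$) are consumed. An alternative, closer to Rudin's original proof, is to fix a direction $v$, show the difference quotients $t\mapsto (f_m(x+tv)-f_m(x))/t$ converge uniformly for small $t\ne 0$ (again via the mean value theorem), and then interchange the limits $t\to 0$ and $m\to\infty$ by the Moore--Osgood theorem; the estimates required are the same, so I would use whichever is cleaner in context.
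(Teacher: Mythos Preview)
The paper does not actually prove this lemma: it simply states it as ``the multivariate generalization of Theorem 7.17 from \cite{rudin_principles_1976}'' and then applies it to the Newton--Kleinmann iterates, so there is no in-paper argument to compare against. Your proof is correct and is precisely the expected adaptation of Rudin's one-variable argument --- using the scalar mean value theorem along line segments in the finite-dimensional space to control increments of $f_n-f_m$, first to establish that $\{f_m\}$ is uniformly Cauchy, and then to interchange the limits $m\to\infty$ and $u\to 0$ via the three-term decomposition you wrote down. The reduction to a bounded convex $h$ is the right move (and, as you note, is free in the intended application on the Stiefel manifold). In short: your proposal supplies the proof the paper omits, and does so by the canonical route.
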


Here, the $\{f_m\}$ are the Newton-Kleinmann iterates $Q_m$, and $x_0$ corresponds to the $C$ matrix that projects onto the slow eigenspace of $A$. The NK iterates are known to converge uniformly over an interval of possible $C$ matrices (in fact any such $C$ matrix for which there exists a $K$ such that $A - C^T C K$ is a stable matrix) \citep{kleinman_iterative_1968}.

We will calculate the gradient $\nabla_C Q_m$ on $\Omega$ by explicitly calculating the directional derivatives of $Q_m$ over a basis of the tangent space of $\Omega$ at $C_{\text{slow}}$. Any element $\Psi$ belonging to the tangent space at $C \in \Omega$ can be parameterized by the following \citep{edelman_geometry_1998}:

$$ \Psi = C M + (I_N - C C^\top) T$$

where $M$ is skew symmetric and $t$ is arbitrary. Let $C_{\text{slow}}$ be the projection matrix onto the slow eigenspace of $A$ of dimension d. Since we work in the eigenbasis of $A$, $C_{\text{slow}} = \begin{bmatrix} I_{d} & 0 \end{bmatrix}$. At this point, elements of the tangent space take on the particularly simple form

$$ \Psi = \begin{bmatrix} M & T \end{bmatrix}$$

where now $M$ is a $d \times d$ skew symmetric matrix and $T \in \mathbb{R}^{d \times (N - d)}$ is arbitrary. A basis for the tangent space is provided by the set of matrices $\{M_{ij}, T_{kl}, i = 2, ...d, j = 1, ..., i - 1, k = 1, ..., d, l = 1, ..., N - d \}$ where $M_{ij}$ is a matrix with entry $1$ at index $(i, j)$ and $-1$ at index $(j, i)$ and zero otherwise, and $T_{kl}$ is the matrix with entry $1$ at index $(k, l)$ and zero otherwise. Denote by $D_\Psi Q_m$ the directional derivative of $Q_m$ along the direction of $\Psi$, viewing $Q_m$ as a function of C (denoted $Q_m[C]$):

\begin{align}
D_\Psi Q_m = \lim_{\alpha \to 0} \frac{Q_m[C_{\text{slow}} + \alpha \Psi] - Q_m[C_{\text{slow}}]}{\alpha}
\label{eq:directional_derivative}
\end{align}
   
Let $\Psi_{ij, kl}$ denote the tangent matrix $\begin{bmatrix} M_{ij} & T_{kl} \end{bmatrix}$. Before calculating $Q_m(C_{\text{slow}} + \alpha \Psi_{ij, kl})$ explicitly, we first observe that as long as the NK iterations are initialized with a diagonal $Q_0$, then the diagonal nature of $C_{\text{slow}}^\top C_{\text{slow}}$ ensures that all $Q_m$ will subsequently remain diagonal matrices. In fact, it can be shown that $\lim_{k \to \infty} Q_m = Q$ will also be diagonal, in this case. We write $A$ in block form as $\begin{bmatrix} \Lambda_{||} & 0 \\ 0 & \Lambda_{\perp} \end{bmatrix}$, and similarly $Q_{m - 1} = \begin{bmatrix} \mathcal{Q}_{||} & 0 \\ 0 & \mathcal{Q}_\perp \end{bmatrix}$, where $\Lambda_{||}, \mathcal{Q}_{||}$ are $d \times d$ diagonal matrices defined on the image of $C_{\text{slow}}$ and $\Lambda_\perp, \mathcal{Q}_\perp$ are diagonal matrices defined on the kernel of $C_{\text{slow}}$. We denote the individual diagonal elements of $\Lambda_{||}, \mathcal{Q}_{||}$ as $\lambda_i, \mathcal{Q}_i, i = 1, ..., d$ and of $\Lambda_{\perp}, \mathcal{Q}_\perp$ as $\lambda_i, \mathcal{Q}_i, i = d , ..., N - d$. Then, equation \ref{eq:lyapunoq_nk} becomes:

\begin{align}
    &\left(\begin{bmatrix} \Lambda_{||} & 0 \\ 0 & \Lambda_{\perp} \end{bmatrix} - \begin{bmatrix} (I_d - \alpha^2 M_{ij}^2) \mathcal{Q}_{||} & (\alpha T_{kl} + \alpha^2 M_{ij}^\top T_{kl}) \mathcal{Q}_\perp \\ (\alpha T_{kl}^\top + \alpha^2 T_{kl}^\top M_{ij}) \mathcal{Q}_{||} & \alpha^2 T_{kl}^\top T_{kl} \mathcal{Q}_\perp \end{bmatrix} \right) Q_m[C_{\text{slow}} + \Psi_{ij, kl}] \nonumber \\
    & + Q_m[C_{\text{slow}} + \Psi_{ij, kl}]     \left(\begin{bmatrix} \Lambda_{||} & 0 \\ 0 & \Lambda_{\perp} \end{bmatrix} - \begin{bmatrix} \mathcal{Q}_{||}(I_d - \alpha^2 M_{ij}^2) & \mathcal{Q}_{||} (\alpha T_{kl} + \alpha^2 M_{ij}^\top T_{kl}) \\ \mathcal{Q}_\perp (\alpha T_{kl}^\top + \alpha^2 T_{kl}^\top M_{ij}) & \mathcal{Q}_\perp \alpha^2 T_{kl}^\top T_{kl} \end{bmatrix} \right) \nonumber \\
    & + I_N + \begin{bmatrix} \mathcal{Q}_{||}(I_d - \alpha^2 M_{ij}^2) \mathcal{Q}_{||} & \mathcal{Q}_{||} (\alpha T_{kl} + \alpha^2 M_{ij}^\top T_{kl}) \mathcal{Q}_\perp \\ \mathcal{Q}_\perp (\alpha T_{kl}^\top + \alpha^2 T_{kl}^\top M_{ij}) \mathcal{Q}_{||} & \alpha^2 \mathcal{Q}_\perp T_{kl}^\top T_{kl} \mathcal{Q}_\perp \end{bmatrix} = 0
    \label{eq:lyapunoq_nk_1}
\end{align}

where we have used $M^\top = -M$. The equivalent equation for $Q_m(C_{\text{slow}})$ reads:

\begin{align}
    &\left(\begin{bmatrix} \Lambda_{||} & 0 \\ 0 & \Lambda_{\perp} \end{bmatrix} - \begin{bmatrix} \mathcal{Q}_{||}& 0 \\ 0 & 0 \end{bmatrix} \right) Q_m[C_{\text{slow}}] + Q_m[C_{\text{slow}}]     \left(\begin{bmatrix} \Lambda_{||} & 0 \\ 0 & \Lambda_{\perp} \end{bmatrix} - \begin{bmatrix} \mathcal{Q}_{||} & 0 \\ 0 & 0 \end{bmatrix} \right)
    + I_N + \\
    &\begin{bmatrix} \mathcal{Q}_{||}^2  & 0 \\ 0 & 0 \end{bmatrix} = 0  
    \label{eq:lyapunoq_nk_2}
\end{align}

This latter equation is easily solved to yield:

$$Q_m[C_{\text{slow}}] = \begin{bmatrix} \frac{1}{2}\left(I_d + \mathcal{Q}_{||}^2\right)\left(\mathcal{Q}_{||} - \Lambda_{||} \right)^{-1} & 0 \\ 0 & -\frac{1}{2}\Lambda_{\perp}^{-1} \end{bmatrix}     $$

To explicitly solve the former equation, we recall that the matrices $M_{ij}$ and $T_{kl}$ have only two and one nonzero terms, respectively. $M_{ij}^2$ contains two nonzero terms at index $(i, i)$ and $(j, j)$. $T_{kl}^\top T_{kl}$ contains one non-zero term at index $(l, l)$. $M_{ij}^\top T_{kl}$ contains a single nonzero term at $(i, l)$ or $(j, l)$ only if $k = i$ or $k = j$, respectively.  Accordingly, we distinguish between where $k = i$ or $k = j$ (without loss of generality we may assume that $k = j$), and where $k \neq i$ and $k \neq j$.

In what follows, we will denote the $(i, j)$ entry of $Q_m[C_{\text{slow}} + \alpha \Psi_{ij, kl}]$ as $q_{ij}$.

\begin{enumerate}
    \item \emph{Case 1: $k = j$}
    In this case, careful inspection of eq. \ref{eq:lyapunoq_nk_1} reveals that it differs from eq. \ref{eq:lyapunoq_nk_2} only within a $3 \times 3$ subsystem:

    \begin{align*}
    &\begin{bmatrix}
        \mathcal{S}_{11} & \mathcal{S}_{12} & \mathcal{S}_{13} \\
        \mathcal{S}_{21} & \mathcal{S}_{22} & \mathcal{S}_{23} \\
        \mathcal{S}_{31} & \mathcal{S}_{32} & \mathcal{S}_{33}
    \end{bmatrix} = 0\\
    \end{align*}
    Note that this matrix is symmetric, yielding 6 equations for 6 unknowns:
    \begin{align*}
    \mathcal{S}_{11} &= \alpha^{2} \mathcal{Q}_{i}^{2} + 2 \alpha^{2} \mathcal{Q}_{d+ l} q_{i, d+l} + \mathcal{Q}_{i}^{2} + 2 q_{ii} \left(- \alpha^{2} \mathcal{Q}_{i} + \lambda_{i} - \mathcal{Q}_{i}\right) + 1 \\
    \mathcal{S}_{12} &= \alpha^{2} \mathcal{Q}_{d + l} q_{j, d + l} - \alpha \mathcal{Q}_{d + l} q_{i, d+l} + q_{ij} \left(- \alpha^{2} \mathcal{Q}_{i} + \lambda_{i} - \mathcal{Q}_{i}\right) + q_{ij} \left(- \alpha^{2} \mathcal{Q}_{j} + \lambda_{j} - \mathcal{Q}_{j}\right) \\
    \mathcal{S}_{13} &= - \alpha^{2} \mathcal{Q}_{i} \mathcal{Q}_{d + l} + \alpha^{2} \mathcal{Q}_{i} q_{ii} + \alpha^{2} \mathcal{Q}_{d + l} q_{d + l} - \alpha \mathcal{Q}_{j} q_{ij} + \\
    &q_{i, d + l} \left(- \alpha^{2} \mathcal{Q}_{d + l} + \lambda_{d + l}\right) + q_{i, d + l} \left(- \alpha^{2} \mathcal{Q}_{i} + \lambda_{i} - \mathcal{Q}_{i}\right)\\    
    \mathcal{S}_{22} &= \alpha^{2} \mathcal{Q}_{j}^{2} - 2 \alpha \mathcal{Q}_{d+l} q_{j, d+ l} + \mathcal{Q}_{j}^{2} + 2 q_{jj} \left(- \alpha^{2} \mathcal{Q}_{j} + \lambda_{j} - \mathcal{Q}_{j}\right) + 1\\
    \mathcal{S}_{23} &= \alpha^{2} \mathcal{Q}_{i} q_{ij} + \alpha \mathcal{Q}_{j} \mathcal{Q}_{d+l} - \alpha \mathcal{Q}_{j} q_{jj} - \alpha \mathcal{Q}_{d+l} q_{d+l,d+l} + q_{j, d+l} \left(- \alpha^{2} \mathcal{Q}_{d+l} + \lambda_{d+l}\right) + \\
    &q_{j, d+l} \left(- \alpha^{2} \mathcal{Q}_{j} + \lambda_{j} - \mathcal{Q}_{j}\right)\\
    \mathcal{S}_{33} &= 2 \alpha^{2} \mathcal{Q}_{i} q_{i, d+ l} + \alpha^{2} \mathcal{Q}_{d+l}^{2} - 2 \alpha \mathcal{Q}_{j} q_{j, d + l} + 2 q_{d+l,d+l} \left(- \alpha^{2} \mathcal{Q}_{d+l} + \lambda_{d+l}\right) + 1
    \end{align*}
    
    Direct solution is still infeasible, but noting our interest is in the behavior of solutions as $\alpha \to 0$, and only terms of $O(\alpha)$ will survive in the limit in eq. \ref{eq:directional_derivative}, we consider solving these equations perturbatively. That is, we express each $q_{ij}$ in a power series in $\alpha$: $q_{ij} = q_{ij}^{(0)} + q_{ij}^{(1)} \alpha + O(\alpha^2)$. One obtains each coefficient in the expansion by plugging this form into the above matrix and setting all terms of the corresponding order in $\alpha$ to 0. The lowest order term, $q_{ij}^{(0)}$, coincides with the solution of the unperturbed system, eq. \ref{eq:lyapunoq_nk_2}. Plugging in the expansion into the $3 \times 3$ subsystem above, as well as the solution of the unperturbed system, and collecting all coefficients proportional to $\alpha$ yields the following system of equations:

    \begin{align*}
    &\begin{bmatrix}
        \mathcal{S}^{(1)}_{11} & \mathcal{S}^{(1)}_{12} & \mathcal{S}^{(1)}_{13} \\
        \mathcal{S}^{(1)}_{21} & \mathcal{S}^{(1)}_{22} & \mathcal{S}^{(1)}_{23} \\
        \mathcal{S}^{(1)}_{31} & \mathcal{S}^{(1)}_{32} & \mathcal{S}^{(1)}_{33}
    \end{bmatrix} = 0\\
    \mathcal{S}^{(1)}_{11} &= 2 \lambda_{i} q_{ii}^{(1)} - 2 \mathcal{Q}_{i} q_{ii}^{(1)}\\
    \mathcal{S}^{(1)}_{12} &= \lambda_{i} q_{ij}^{(1)} + \lambda_{j} q_{ij}^{(1)} - \mathcal{Q}_{i} q_{ij}^{(1)} - \mathcal{Q}_{j} q_{ij}^{(1)} \\
    \mathcal{S}^{(1)}_{13} &= \lambda_{i} q_{i, d+l}^{(1)} + \lambda_{d + l} q_{i, d+l}^{(1)} - \mathcal{Q}_{i} q_{i, d+l}^{(1)} \\
    \mathcal{S}^{(1)}_{22} &= 2 \lambda_{j} q_{jj}^{(1)} - 2 \mathcal{Q}_{j} q_{jj}^{(1)}\\
    \mathcal{S}^{(1)}_{23} &= \lambda_{j} q_{j, d+l}^{(1)} + \lambda_{d + l} q_{j, d+l}^{(1)} + \mathcal{Q}_{j} \mathcal{Q}_{d + l} - \mathcal{Q}_{j} q_{j, d+l}^{(1)} - \frac{\mathcal{Q}_{j} \left(\mathcal{Q}_{j}^{2} + 1\right)}{- 2 \lambda_{j} + 2 \mathcal{Q}_{j}} + \frac{\mathcal{Q}_{d + l}}{2 \lambda_{d + l}}\\
    \mathcal{S}^{(1)}_{33} &= 2 \lambda_{d + l} q_{d + l}^{(1)}
    \end{align*}    

Solving this system yields the following solutions for the $q_{ij}^{(1)}$:
\begin{align*}
    q_{ii}^{(1)} &= 0\\
    q_{jj}^{(1)} &= 0\\
    q_{d+l,d+l}^{(1)} &= 0\\
    q_{ij}^{(1)} &= 0 \\
    q_{i, d+l}^{(1)} &= 0 \\
    q_{j, d+l}^{(1)} &= \frac{- 2 \lambda_{j} \lambda_{d + l} \mathcal{Q}_{j} \mathcal{Q}_{d + l} - \lambda_{j} \mathcal{Q}_{d + l} - \lambda_{d + l} \mathcal{Q}_{j}^{3} + 2 \lambda_{d + l} \mathcal{Q}_{j}^{2} \mathcal{Q}_{d + l} - \lambda_{d + l} \mathcal{Q}_{j} + \mathcal{Q}_{j} \mathcal{Q}_{d + l}}{2 \lambda_{j}^{2} \lambda_{d + l} + 2 \lambda_{j} \lambda_{d + l}^{2} - 4 \lambda_{j} \lambda_{d + l} \mathcal{Q}_{j} - 2 \lambda_{d + l}^{2} \mathcal{Q}_{j} + 2 \lambda_{d + l} \mathcal{Q}_{j}^{2}}
\end{align*}

\item \emph{Case 2}: $k \neq i, k \neq j$. In this case, we must again consider the $3 \times 3$ subsystem indexed by $i, j, d + l$, but since $M_{ij} T_{kl}$ is a matrix of all zeros, the expression simplifies considerably:

\begin{align*}
    &\begin{bmatrix}
        \mathcal{S}_{11} & \mathcal{S}_{12} & \mathcal{S}_{13} \\
        \mathcal{S}_{21} & \mathcal{S}_{22} & \mathcal{S}_{23} \\
        \mathcal{S}_{31} & \mathcal{S}_{32} & \mathcal{S}_{33}
    \end{bmatrix} = 0\\
    \mathcal{S}_{11} &= \alpha^{2} \mathcal{Q}_{i}^{2} + \mathcal{Q}_{i}^{2} + 2 q_{i} \left(- \alpha^{2} \mathcal{Q}_{i} + \lambda_{i} - \mathcal{Q}_{i}\right) + 1\\
    \mathcal{S}_{12} &= q_{ij} \left(- \alpha^{2} \mathcal{Q}_{i} + \lambda_{i} - \mathcal{Q}_{i}\right) + q_{ij} \left(- \alpha^{2} \mathcal{Q}_{j} + \lambda_{j} - \mathcal{Q}_{j}\right)\\
    \mathcal{S}_{13} &= \lambda_{d + l} q_{i, d + l} + q_{i, d + l} \left(- \alpha^{2} \mathcal{Q}_{i} + \lambda_{i} - \mathcal{Q}_{i}\right)\\
    \mathcal{S}_{22} & \alpha^{2} \mathcal{Q}_{j}^{2} + \mathcal{Q}_{j}^{2} + 2 q_{j} \left(- \alpha^{2} \mathcal{Q}_{j} + \lambda_{j} - \mathcal{Q}_{j}\right) + 1 \\
    \mathcal{S}_{23} &= \lambda_{d + l} q_{j, d + l} + q_{j, d + l} \left(- \alpha^{2} \mathcal{Q}_{j} + \lambda_{j} - \mathcal{Q}_{j}\right)\\
    \mathcal{S}_{33} &= 2 \lambda_{d + l} q_{d + l} + 1    
\end{align*}

Plugging in the power series expansion $q_{ij} = q_{ij}^{(0)} + q_{ij}^{(1)} \alpha + O(\alpha^2)$, one finds the lowest order terms in $\alpha$ within this system of equations occurs at $O(\alpha^2)$, and thus to $O(\alpha)$, the solution of $Q_m[C_{\text{slow}} + \alpha \Psi_{ij, kl}]$ coincides with $Q_m[C_{\text{slow}}]$.

\end{enumerate}

To complete the proof of Theorem 3, we must calculate the following quantity:

\begin{align*}
    D_{\Psi_{ij, kl}} \text{Tr}\left(Q_m^2\right) = \lim_{\alpha \to 0} \frac{\text{Tr}(Q_m[C_{\text{slow} + \alpha \Psi_{ij, kl}}]^2) - \text{Tr}(Q_m[C_{\text{slow}}]^2)}{\alpha}
\end{align*}

From the case-wise analysis above, we see that the only matrix element of $Q_m$ that differs between $Q_m[C_{\text{slow} + \alpha \Psi_{ij, kl}}]$ and $Q_m[C_{\text{slow}}]$ to $O(\alpha)$ is an off-diagonal term ($q^{(1)}_{j, d+l}$). However, this term does not contribute to the trace of $Q_m^2$ at $O(\alpha)$. Thus, we conclude that along a complete basis for the tangent space of $\Omega$ at $C_{\text{slow}}$, $D_{\Psi_{ij, kl}} \text{Tr}\left(Q_m^2\right) = 0$. From this, we conclude that $\nabla_C \text{Tr}(Q_m[C_{\text{slow}}]^2) = 0$ on $\Omega$. The proof of Lemma 2 follows from application of Lemma 3. The proof of Theorem 2 then follows upon combining Lemma 1 and Lemma 2. $\square$


\subsection{Supplementary Figures}

\begin{center}
    \includegraphics[width=\textwidth]{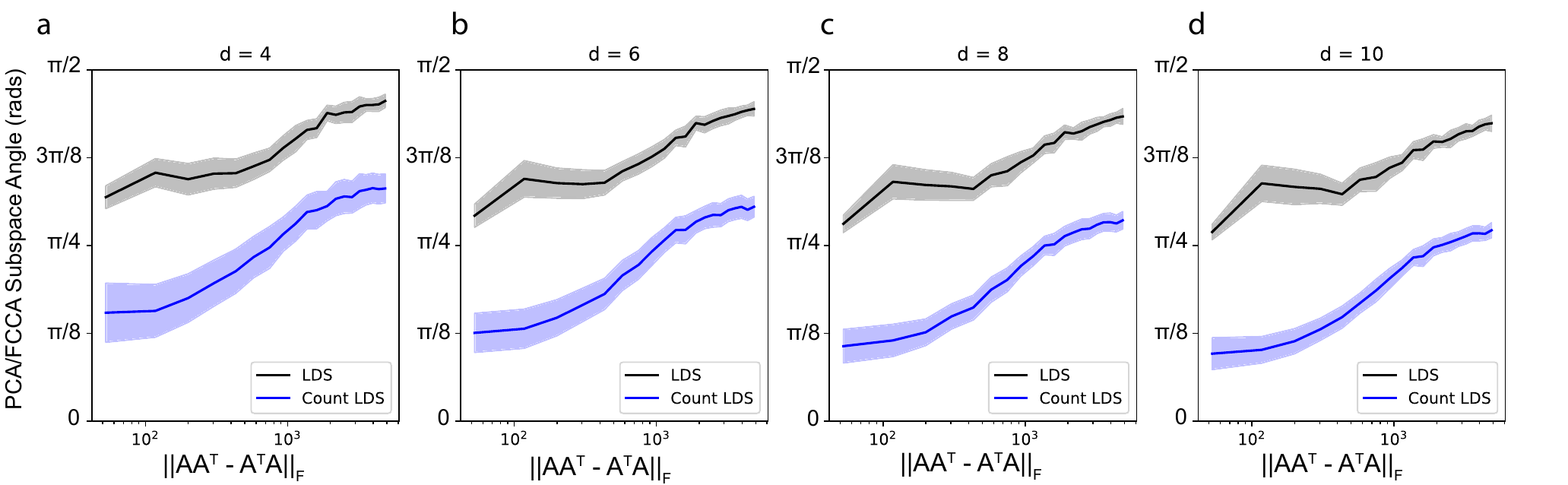}
    \captionof{figure}{(Black) Average subspace angles between FCCA and PCA projections applied to Dale's law constrained linear dynamical systems (LDS) as a function of non-normality. (Blue) Subspace angles between FCCA and PCA projections applied to firing rates derived from spiking activity driven by Dale's Law constrained LDS. Spread around both curves indicates standard deviation taken over 20 random generations of $A$ matrices and 10 random initializations of FCCA. Panels a-d report results at projection dimension $d=4, 6, 8, 10$, respectively, to complement the results shown in \textbf{Figure 2} in the manuscript, demonstrating that non-normality drives the divergence between FCCA and PCA subspaces.}
    \label{sfig:soc_ssa_vdim}
\end{center}

\begin{center}
    \includegraphics[width=0.5\textwidth]{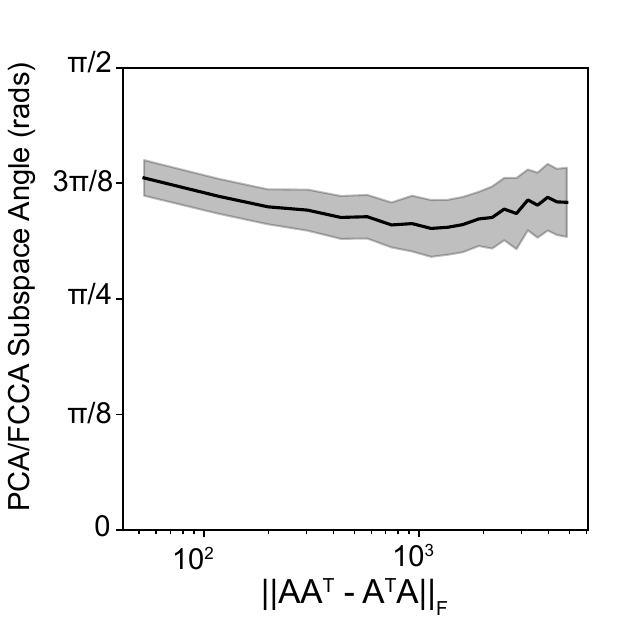}
\captionof{figure}{Average subspace angles between $d=2$ FCCA and PCA projections applied to a switching linear dynamical system sequence as a function of non-normality. Grey spread indicates standard deviation over 20 randomly generated sequences of $A$ matrices and 10 repetitions of generated firing rates from each model sequence.}
\label{sfig:soc_ns}
\end{center}

We found that PCA and FCCA identify distinct subspaces in non-normal systems. To evaluate to what degree this observation is robust to non-stationarity in the data generating process, we simulated data from a system that switched between a sequence of three $A$ matrices (still constrained to follow Dale's law) in eq. \ref{eq:lds_stochastic} with roughly equivalent degree of non-normality. In \textbf{Supplementary Figure \ref{sfig:soc_ns}}, we plot FCCA/PCA subspace angles as a function of non-normality (spread taken over 20 different sequences of $A$ matrices and 10 draws of activity from each sequence).  We find subspace angles to be consistently large, with only a weak dependence on non-normality. Thus, even in the case of a linear switched system \citep{linderman2016recurrent}, FCCA and PCA identify distinct subspaces of dynamics.
\begin{center}\includegraphics[width=\textwidth]{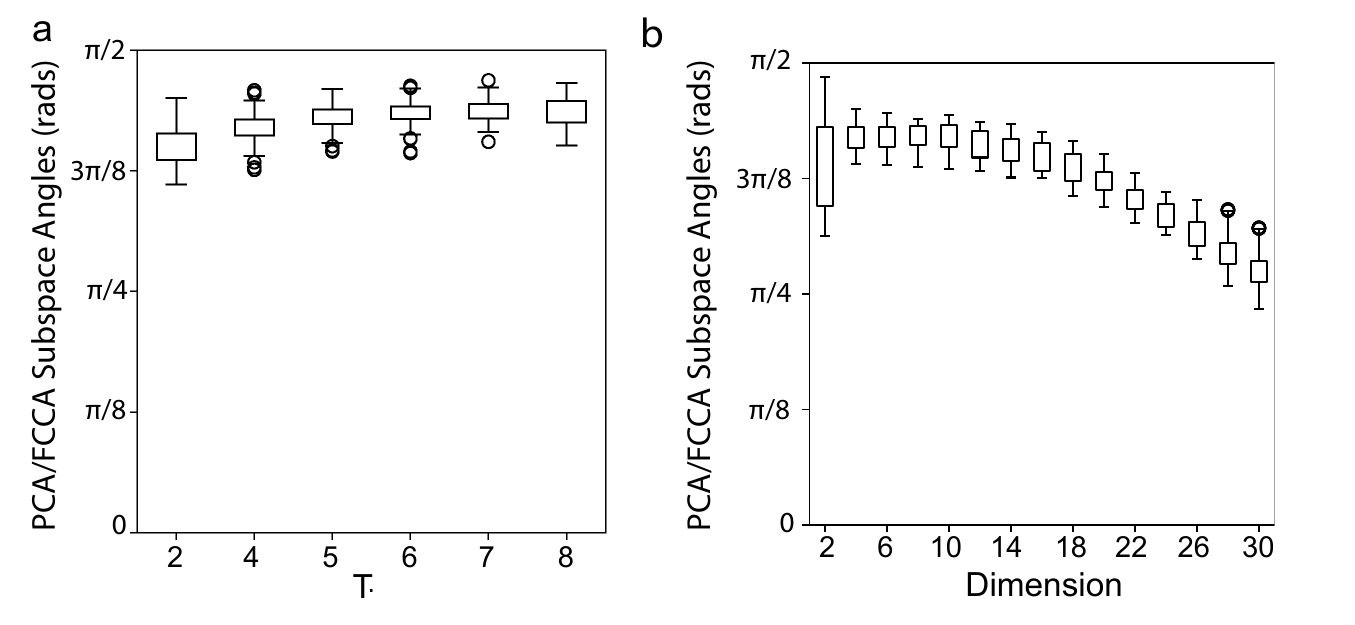}
\captionof{figure}{(a) Full range of average subspace angles at projection dimension $d=2$ between PCA and FCCA solutions for various T. Spread is taken over recording sessions and folds of the data within each recording session. (b) Full range of spread in average subspace angles between FCCA for $T=3$ and PCA taken across 20 initializations of FCCA and all recording sessions.}
\label{sfig:fcca_pca_ss_supp}
\end{center}
In \textbf{Supplementary Figure \ref{sfig:fcca_pca_ss_supp}}, we investigate the robustness of the substantial subspace angles between FCCA and PCA observed in \textbf{Figure \ref{fig:hpc}a} to three sources of potential variability: (i) choice of the $T$ parameter within FCCA, (ii) the dimensionality of projection, and (iii) different initializations of FCCA. In \textbf{Supplementary Figure \ref{sfig:fcca_pca_ss_supp} a}, we plot the full range of average subspace angles across recording sessions at projection dimension $d=2$ between PCA and FCCA for various choices of $T$ ($T=3$ is shown in \textbf{Figure \ref{fig:hpc}a}). We observe that subspace angles remain consistently large ($> 3\pi/8$ rads) across $T$. In \textbf{Supplementary Figure \ref{sfig:fcca_pca_ss_supp}b}, we plot the full range of average subspace angles between FCCA (using $T=3$) and PCA across a range of projection dimensions. The spread in boxplots is taken across both recording sessions and twenty initializations of FCCA. We observe relatively little variability in the average subspace angles for a fixed projection dimensionality. As the projection dimension is increased, we observe the average subspace angles between FCCA and PCA decrease, from $\approx 3\pi/8$ rads to $\approx \pi/4$ rads. This is to be expected, as it is in general less likely that higher dimensional subspaces will lie completely orthogonal to each other. Overall, we conclude that FCCA and PCA subspaces are geometrically distinct in the hippocampal dataset examined.

To evaluate the robustness of FCCA's behavioral predictions to different intializations of the algorithm, we trained linear decoders of rat position from FCCA subspaces obtained from each of twenty initializations of FCCA within each recording session. In \textbf{Supplementary Figure \ref{sfig:Tvar_decoding}}, we plot the full spread in the resulting cross-validated $r^2$ relative to the median cross-validated $r^2$ as a function of projection dimension. By $d=6$, the range of spread in prediction $r^2$ is less than the corresponding difference between FCCA and PCA $r^2$. We therefore conclude that the behavioral prediction performance of FCCA is robust to the non-convexity of its objective function.

\begin{center}
\includegraphics[width=0.5\textwidth]{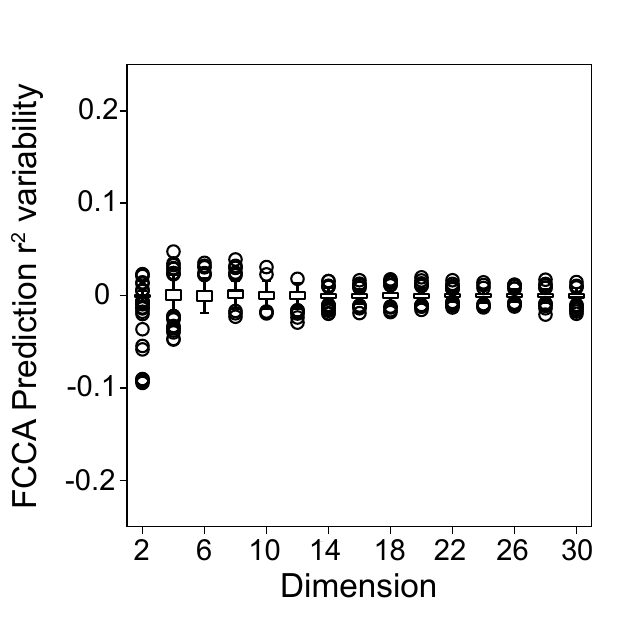}
\captionof{figure}{Full range of variation in cross-validated position $r^2$ from projected FCCA activity relative to the median cross-validated $r^2$. Spread is taken across 20 initializations of FCCA and across all recording sessions}
\label{sfig:Tvar_decoding}
\end{center}


\end{document}